\newcommand{\Qomit}[1]{}
\newcommand{\Xomit}[1]{}
\definecolor{webgreen}{rgb}{0,0.4,0}
\definecolor{webbrown}{rgb}{0.6,0,0}
\definecolor{purple}{rgb}{0.5,0,0.25}
\definecolor{darkblue}{rgb}{0,0,0.7}
\definecolor{darkred}{rgb}{0.7,0,0}
\definecolor{darkgreen}{rgb}{0,0.7,0}
\newcommand{\ignore}[1]{}
\newtheorem{prop}{{\sc Proposition}}
\newtheorem{cor}{{\sc Corollary}}
\newtheorem{theorem}{{\sc Theorem}}
\newtheorem{defn}{{\sc Definition}}
\newtheorem{example}{{\sc Example}}
\newenvironment{proof}{\noindent {\bf \sl Proof\/}:\enspace}
{\hfill $\blacksquare{}$ \vspace{12pt}}
\begin{document}

\begin{titlepage}
\title{\textbf{Symmetric reduced form voting}\thanks{We are grateful to Federico Echenique, Arunava Sen, Zaifu Yang,
and two anonymous referees for their comments.}}

\author{Xu Lang\thanks{Southwest University of Finance and Economics ({\tt langxu@swufe.edu.cn}).}$\;\;$ and Debasis Mishra\thanks{Indian Statistical Institute, Delhi ({\tt dmishra@isid.ac.in}).}}

\maketitle

\begin{abstract}
We study a model of voting with two alternatives in a symmetric environment. We characterize the interim
allocation probabilities that can be implemented by a symmetric voting rule. We show that every such
interim allocation probabilities can be implemented as a convex combination of two families of
deterministic voting rules: {\sl qualified majority} and {\sl qualified anti-majority}. We also provide analogous results
by requiring implementation by a symmetric monotone (strategy-proof) voting rule and by a symmetric unanimous voting rule.
We apply our results to show that an {\sl ex-ante} Rawlsian rule is a convex combination of a pair
of qualified majority rules.

\bigskip
\noindent JEL Classification number: D82
\medskip

\noindent Keywords: Reduced form voting, unanimous voting, monotone reduced form

\end{abstract}
\thispagestyle{empty}
\end{titlepage}

\section{Introduction}

In many mechanism design problems, the incentive constraints and the objective function of the designer
can be written in the interim allocation space. While a mechanism describes the {\sl ex-post} allocation
of the agents, the solution to an incentive constrained optimization may describe only interim allocations.
This raises a natural question:
{\sl which interim allocations can be generated by a (ex-post) mechanism}?
If there is a characterization of interim allocations that can be generated by a mechanism, then it can be
put as a constraint in any incentive constrained optimization. This approach to mechanism design is known as
the reduced form approach. It was pioneered in the single object auction literature by \citet{M84,MR84}, leading
to the seminal characterization in Border's theorem \citep{B91}.

We analyze reduced form voting mechanisms in a simple model of voting with two alternatives: $a$ and $b$.
In our model, each agent has two possible types: (i) $a$-type agent prefers $a$ followed by $b$ and (ii) $b$-type
agent prefers $b$ followed by $a$. We consider a symmetric voting environment: the probability of two type profiles with the same number of $a$-types is identical. Hence, we focus on symmetric voting rules, which  choose a probability distribution
over $a$ and $b$ for every number of $a$-types. The {\sl interim} allocation probability of choosing $a$ (and $b$)
for $a$-type and $b$-type agents can be computed from the symmetric voting rule. The reduced form
voting question is the following: {\sl given the interim allocation probabilities of choosing $a$ and $b$ for $a$-type and $b$-type agents, is there a symmetric voting rule that can generate these interim allocation probabilities?}

We completely characterize these interim allocation probabilities. We call them {\sl reduced form implementable}
symmetric voting rules. The reduced form implementable symmetric voting rules are characterized by a family of $2(n+1)$ linear inequalities, where
$n$ is the number of agents. The extreme points of these symmetric voting rules are (i) a family of $(n+1)$ {\sl qualified majority}
voting rules and (ii) a family of $(n+1)$ {\sl qualified anti-majority} voting rules. A qualified majority (anti-majority) voting
rule is characterized by a quota $K$, and chooses alternative $a$ (respectively, $b$) whenever at least
$K$ agents vote for $a$. As a corollary, we show that every symmetric voting rule is {\sl reduced form
equivalent} (i.e., generating the same interim allocation probabilities)
to a convex combination of qualified majority and qualified anti-majority voting rules.
Both these families contain only deterministic voting rules.

We extend our characterization to monotone voting rules, i.e.,
voting rules that select $a$ with higher probability as the number of $a$-types increase. 
Monotone voting rules are strategy-proof (dominant strategy incentive compatible). The reduced form implementable symmetric monotone voting rules are characterized by a family of $(n+2)$ linear inequalities. The
extreme points of these rules are the family of $(n+1)$ qualified majority rules and a constant
rule that selects alternative $b$ at all type profiles. We use this result to 
show that an ex-ante Rawlsian rule (that maximizes the minimum of expected
utility of $a$-type agents and $b$-type agents) is a convex combination 
of a pair of qualified majority rules. We also investigate the reduced form question under a weaker notion 
of incentive constraints: {\sl ordinal Bayesian incentive
compatibility (OBIC)}~\citep{DP88,MS04,M16}. We show its connection 
to reduced form implementation by monotone voting rules.

We extend our characterizations for unanimous symmetric voting rules: a voting rule is unanimous if it chooses $a$ ($b$) whenever all the agents have type $a$ (respectively, $b$).
Using this, we characterize
the symmetric priors for which OBIC is implied by symmetry and unanimity. For independent priors,
this is the case when the probability of $a$ type is sufficiently small or sufficiently high. If we allow for correlation (still maintaining
symmetry), the set of priors where symmetry and unanimity implies OBIC contains priors where extreme
type profiles with low and high number of $a$ types are chosen with high probability.

We believe our results will be useful in designing optimal mechanisms in various models
of voting over a pair of alternatives. Indeed, Border's theorem is
extensively used in auction theory and mechanism design: for designing optimal auctions with budget constrained bidders
\citep{PV14}; for designing optimal verification mechanisms \citep{BDL14,MZ17,L20,L21}; for designing
symmetric auctions \citep{DP17}, and so on. The advantage of using a reduced form
in mechanism design problems is that they are in lower dimensional spaces than
the ex-post allocation problems. For instance, in the problem we study, the reduced form is two dimensional
but the (ex-post) voting rules are $n$-dimensional, where $n$ is the number of agents.
Our easy derivation of the ex-ante Rawlsian rule 
illustrates this advantage.

We give a detailed review of the literature in Section \ref{sec:lit}, but relate our results to Border's theorem here.
Consider Border's single object allocation problem but where each agent has two types (possible values
for the object): $\{0,1\}$. This is analogous to our problem where there are two types: $a$-type and $b$-type.
However, the voting problem in the current paper is a public good problem: the probability of choosing
$a$ and $b$ is the same across all the agents. The single object allocation problem is a private good problem
where the probability of choosing $a$ and $b$ may differ across agents. This makes the feasibility constraints
of allocation rules different in both the problems.

\cite{GK22} use a geometric approach (using support functions of convex sets) to study implementation in  social choice problems.  Their abstract formulation captures our problem too, and 
their results can be used to describe the support functions of our reduced form voting rules.
But, this neither describes the extreme points nor the necessary and sufficient
conditions that characterize the reduced form voting rules. \footnote{Further, they assume independent priors which we do not assume. They use their support function characterization to rederive Border's result.}
Indeed, it is not clear that an analogue of Border's theorem can exist in the voting problem.
In an important paper, \citet{GNR18} show that in a simple public good model with 
two alternatives, no computationally tractable characterization of reduced form allocation rules
is possible. Though this negative result applies to our model,  they allow reduced form implementation via asymmetric mechanisms.
By only looking at symmetric mechanisms, we overcome this impossibility: our characterization
admits a computationally tractable description of reduced form probabilities by a system of
(linear in number of voters) linear inequalities.

The rest of the paper is organized as follows. Section \ref{sec:model} introduces the model.
 Section \ref{sec:red} provides the main result of the paper: a characterization of the reduced form
 implementable voting rules. Section \ref{sec:mon} extends the main result by requiring monotone
 implementation, and provides an application to finding a Rawlsian voting rule. 
 Section \ref{sec:unan} extends the main characterization with unanimity
 and Section \ref{sec:large} for large economies.
Section \ref{sec:lit} gives a detailed literature review. The missing proofs are in Appendix \ref{sec:app1}. Proofs of Theorem \ref{theo:unan} and Theorem \ref{theo:uext} are similar to Theorem \ref{theo:main} and Theorem \ref{theo:ext} respectively.
 So, they have been provided in a separate appendix (Appendix \ref{sec:supapp}).

\section{The model}
\label{sec:model}

Let $N=\{1,\ldots,n\}$ be a finite set of agents (voters), where $n \ge 2$. Let $A=\{a,b\}$ be the set of two social alternatives (for instance,
a status-quo and a new alternative).
Each agent has a strict ranking of $A$. Hence, the preference of an agent can be expressed by her top ranked alternative.
We call it the {\sl type} of the agent. The type of agent $i$ is denoted as $t_i \in \{a,b\}$, which means that $t_i$
is the top ranked alternative of agent $i$. Hence, the set of all types (type space) is $A$ and the set of
all type profiles is $A^n$. A type profile in $A^n$ is denoted by $t \equiv (t_1,\ldots,t_n)$.

{\sc Exchangeable Prior.} Let $G$ be a probability distribution over type profiles. We assume
$G$ to be {\sl exchangeable}, i.e., for every type profile $t$ and every permutation $\sigma$,
$G(t)=G(t^{\sigma})$, where $t^{\sigma}$ is the permuted type profile. In this sense,
the probability of a type profile is only a function of number of agents having type $a$.
So, for every $k \in \{0,\ldots,n\}$, for any set of $k$ agents, the probability that exactly these
agents have type $a$ (and other agents have type $b$) is given by $\lambda(k)$.
By exchangeability, the probability a type profile has exactly $k$ agents of type $a$
is $C(n,k)\lambda(k)$, where $C(n,k)$ denotes the number of $k$-combinations from a set of $n$ elements.

We denote the marginal probability of any agent having type $a$ as $\pi$ and having
type $b$ as $(1-\pi)$.

{\sc Voting rule.} A {\sl voting rule} is a map $q: A^n \rightarrow [0,1]$, where $q(t)$ denotes the
probability with which alternative $a$ is chosen (and, hence, $1-q(t)$ is the probability with which
alternative $b$ is chosen) at type profile $t$. We will only consider {\sl symmetric or anonymous} voting rules,
i.e., for any permutation $\sigma$, we will require $q(t) = q(t^{\sigma})$ for all $t \in A^n$, where $t^{\sigma}$
is type profile obtained by permuting $t$ using the permutation $\sigma$. With a
slight abuse of notation, we will write $q$ as a map $q: \{0,1,\ldots,n\} \rightarrow [0,1]$, i.e., $q(k) \in [0,1]$
denotes the probability with which alternative $a$ is chosen at any type profile with $k$ votes for $a$.\footnote{We restrict ourselves to ordinal voting rules. Any cardinal voting rule
in a two alternative model must be ordinal if it is incentive compatible~\citep{MS04}.
Since reduced forms are usually used along with incentive constraints, restricting
attention to ordinal voting rule is without loss of generality in this sense. Even
without incentive constraints, \citet{ST12,AK14} show that restricting attention
to ordinal voting rules is without loss of generality if the planner is optimizing over
interim utilities of agents.} We only discuss symmetric voting rules, and whenever we 
refer to a voting rule from now on, we will mean a symmetric voting rule.

Given a voting rule $q$, we can compute the interim probability of each alternative being chosen.
If an agent has type $a$, the probability that alternative $a$ is chosen by voting rule $q$ is
denoted by $Q(a)$. To relate $Q$ and $q$, denote the probability that there are $k$ agents of type $a$ as
\begin{align*}
B(k) &:= \lambda(k) C(n,k) \qquad~\forall~k \in \{0,\ldots,n\}
\end{align*}
Note the following:
\begin{align*}
\sum_{k=0}^n B(k) &= 1~~~~\textrm{and}~~~\sum_{k=0}^n k B(k) = n\pi 
\end{align*}
The second equality follows because both $n\pi$ and $\sum_k k B(k)$ denote  the expected number of agents who have type $a$.

Using this, $Q$ can be computed from $q$ as follows.
\begin{align*}
n\pi Q(a) &= \sum_{k=0}^{n}k q(k)B(k),
\end{align*}
where both the LHS and the RHS computes the expected number of $a$-types who get $a$.
Hence,
\begin{align*}
Q(a) &= \frac{1}{n\pi}\sum_{k=0}^{n}k q(k)B(k),
\end{align*}
Similarly, if an agent has type $b$, the probability that alternative $a$ is chosen by voting rule $q$ is
\begin{align*}
Q(b) &= \frac{1}{n(1-\pi)} \sum_{k=0}^{n}(n-k)q(k)B(k)
\end{align*}
Of course, $1-Q(a)$ and $1-Q(b)$ denote the interim probabilities with which alternative $b$ is chosen
for types $a$ and $b$ respectively.

\section{Reduced form implementation}
\label{sec:red}

The interim allocation probabilities are two dimensional. Hence, they are easy to work with.
Some interim allocation probabilities are clearly not possible: for instance $Q(a)=1,Q(b)=0$
is impossible for $n\geq 2$ because any voting rule for which $Q(a)=1$ must choose $a$ at some profiles
where other agents have type $b$. By symmetry, $Q(b) \ne 0$. Then, the reduced form
question is what interim allocation probabilities are possible.

\begin{defn}
Interim allocation probabilities $Q \equiv (Q(a),Q(b)) \in [0,1]^2$ is {\bf reduced form implementable} if there exists a voting rule $q$ such that
\begin{align*}
\frac{1}{n\pi} \sum_{k=0}^{n}k q(k) B(k) &= Q(a) \\
\frac{1}{n(1-\pi)} \sum_{k=0}^{n}(n-k) q(k)B(k) &= Q(b) \\
0 \le q(k) &\le 1 \qquad~\forall~k \in \{0,\ldots,n\}
\end{align*}
\end{defn}

To see what kind of conditions are necessary for reduced form implementation, consider the following 
setting. Suppose there is a cost $j \in \{0,1,\ldots,n\}$
of choosing alternative $a$ but alternative $b$ costs zero. For any $a$-type agent, suppose the value
of alternative $a$ is $1$ and that of alternative $b$ is $0$. The expected value of $a$-types minus the cost
of choosing an alternative from a voting rule $q$ is
\begin{align}\label{eq:optn}
\sum_{k=0}^n (k-j)q(k)B(k) &= \frac{1}{n} \Big[(n-j)\sum_{k=0}^n kq(k)B(k) -j \sum_{k=0}^n(n-k) q(k) B(k) \Big] \\
&= (n-j)\pi Q(a) - j (1-\pi)Q(b) \nonumber
\end{align}

The LHS of (\ref{eq:optn}) is maximized by setting $q(k)=0$ if $k < j$ and $q(k)=1$ if $k \ge j$.
Hence, an upper bound for LHS of (\ref{eq:optn}) is $\sum \limits_{k=j}^n (k-j)B(k)$.
Similarly, the LHS of (\ref{eq:optn}) is minimized by setting
$q(k)=1$ if $k < j$ and $q(k)=0$ if $k \ge j$. Hence, a lower bound for LHS of (\ref{eq:optn}) is
$\sum \limits_{k=0}^j (k-j)B(k)$.
Thus, for any $j \in \{0,1,\ldots,n\}$,
\begin{align}
\sum_{k=j}^n (k-j)B(k) &\ge (n-j)\pi Q(a) - j (1-\pi)Q(b) \ge \sum_{k=0}^j (k-j)B(k) \label{eq:lub}
\end{align}
So, the inequalities (\ref{eq:lub}) are necessary for reduced form implementation.
Our main result says they are sufficient.

\begin{theorem}
\label{theo:main}
Interim allocation probabilities $Q$ is reduced form implementable if and only if
\begin{align}
j (1-\pi)Q(b) - (n-j)\pi Q(a) + \sum_{k=j}^n (k-j) B(k) &\ge 0 \qquad~\forall~j \in \{0,\ldots,n\} \label{eq:e1} \\
(n-j) \pi Q(a) - j (1-\pi)Q(b) + \sum_{k=0}^j (j-k) B(k) &\ge 0 \qquad~\forall~j \in \{0,\ldots,n\} \label{eq:e2}
\end{align}
\end{theorem}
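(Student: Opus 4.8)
The plan is to split the claim into necessity and sufficiency. Necessity is already in hand: the chain (\ref{eq:lub}) is exactly (\ref{eq:e1})--(\ref{eq:e2}) after rearrangement, and (\ref{eq:lub}) follows from the bang-bang optimization of the linear objective in (\ref{eq:optn}). So the real content is sufficiency, which I would cast as a statement about a convex polytope. The set $\mathcal{F}$ of reduced form implementable $Q$ is the image of the cube $[0,1]^{n+1}$ of voting rules $q$ under the linear map $q\mapsto(Q(a),Q(b))$, hence a convex polytope in $\mathbb{R}^2$; let $\mathcal{P}$ denote the polytope cut out by (\ref{eq:e1})--(\ref{eq:e2}). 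Necessity gives $\mathcal{F}\subseteq\mathcal{P}$, and since both sets are closed and convex, it suffices to show that every supporting halfplane of $\mathcal{F}$ contains $\mathcal{P}$; that is, writing the support function $h_{\mathcal{F}}(\alpha,\beta):=\max\{\,\alpha Q(a)+\beta Q(b):Q\in\mathcal{F}\,\}$, I must show $\alpha Q(a)+\beta Q(b)\le h_{\mathcal{F}}(\alpha,\beta)$ for every $(\alpha,\beta)\in\mathbb{R}^2$ and every $Q\in\mathcal{P}$.

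The first step is to compute $h_{\mathcal{F}}$ explicitly. Because $q\mapsto(Q(a),Q(b))$ is coordinatewise separable in $q$, we get $h_{\mathcal{F}}(\alpha,\beta)=\max_{q\in[0,1]^{n+1}}\sum_{k=0}^n c(k)\,q(k)B(k)=\sum_{k=0}^n\max\{c(k),0\}\,B(k)$, where $c(k):=\frac{\alpha k}{n\pi}+\frac{\beta(n-k)}{n(1-\pi)}$ is an affine function of $k$. The maximizer is therefore a bang-bang rule, $q(k)=1$ exactly when $c(k)\ge 0$, which — since $c$ is monotone in $k$ — is a qualified majority rule (when $c$ is nondecreasing) or a qualified anti-majority rule (when $c$ is nonincreasing); this is the mechanism behind the ``extreme points are (anti-)majority rules'' phenomenon. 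The key algebraic fact is that for the direction $(\alpha,\beta)=((n-j)\pi,-j(1-\pi))$ the coefficient function collapses exactly to $c(k)=k-j$, so $h_{\mathcal{F}}((n-j)\pi,-j(1-\pi))=\sum_{k=j}^n(k-j)B(k)$ and the supporting inequality of $\mathcal{F}$ in that direction is literally (\ref{eq:e1}) for index $j$; symmetrically, $(\alpha,\beta)=(-(n-j)\pi,j(1-\pi))$ gives $c(k)=j-k$ and reproduces (\ref{eq:e2}) for index $j$. Hence the $2(n+1)$ inequalities defining $\mathcal{P}$ are precisely the supporting inequalities of $\mathcal{F}$ in these $2(n+1)$ directions.

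It remains to reduce an arbitrary direction to these. I would observe that $h_{\mathcal{F}}$ is piecewise linear, being linear on each cone of directions on which the active set $S(\alpha,\beta):=\{k:c(k)\ge 0\}$ is constant (there $h_{\mathcal{F}}=\sum_{k\in S}c(k)B(k)$, which is linear in $(\alpha,\beta)$). Since $c$ is affine in $k$, the only possible active sets are the upper sets $\{k\ge j\}$, the lower sets $\{k\le j\}$, the full set $\{0,\dots,n\}$, and the empty set, and the boundary rays between consecutive cones are exactly the $2(n+1)$ directions $\pm((n-j)\pi,-j(1-\pi))$, $j=0,\dots,n$ (the full and empty active sets giving the two ``quadrant'' cones, adjacent to the extreme directions with $j\in\{0,n\}$). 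So any $(\alpha,\beta)$ lies in a cone spanned by two consecutive designated directions $d,d'$, say $(\alpha,\beta)=\mu d+\mu'd'$ with $\mu,\mu'\ge 0$; since $h_{\mathcal{F}}$ is linear there, $h_{\mathcal{F}}(\alpha,\beta)=\mu\,h_{\mathcal{F}}(d)+\mu'\,h_{\mathcal{F}}(d')$. For $Q\in\mathcal{P}$, the two corresponding instances of (\ref{eq:e1})--(\ref{eq:e2}) give $d\cdot Q\le h_{\mathcal{F}}(d)$ and $d'\cdot Q\le h_{\mathcal{F}}(d')$, hence $\alpha Q(a)+\beta Q(b)=\mu(d\cdot Q)+\mu'(d'\cdot Q)\le h_{\mathcal{F}}(\alpha,\beta)$, which is what we needed.

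The step I expect to be the main obstacle is the bookkeeping just sketched: verifying that the $2(n+1)$ designated directions really are the consecutive extreme rays of the normal fan of $\mathcal{F}$ in the correct cyclic order, and handling the two degenerate cones (full and empty active sets), which correspond to the constant rules ``always $a$'' and ``always $b$'' and produce the box constraints $0\le Q(a),Q(b)\le 1$ out of the $j\in\{0,n\}$ instances of (\ref{eq:e1})--(\ref{eq:e2}). A route that avoids the fan combinatorics is a direct construction: fix $s:=\frac1n\big(\pi Q(a)+(1-\pi)Q(b)\big)=\sum_k q(k)B(k)$, note that as $q$ ranges over rules with $\sum_k q(k)B(k)=s$ the quantity $\sum_k k\,q(k)B(k)$ sweeps an interval whose endpoints are attained by filling $q$ greedily from the largest $k$ downward and from the smallest $k$ upward, and check that $n\pi Q(a)$ lies in that interval if and only if (\ref{eq:e1})--(\ref{eq:e2}) hold — though this merely relocates the same computation into the endpoint formulas.
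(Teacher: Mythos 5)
Your argument is correct, but it takes a genuinely different route from the paper's. The paper proves sufficiency in two stages: first it characterizes the extreme points of the implementable set $\mathcal{F}$ (Theorem \ref{theo:ext}) by solving the bang--bang linear program over the cube of voting rules, and then it shows that the polytope $\mathcal{P}^*$ cut out by (\ref{eq:e1})--(\ref{eq:e2}) has exactly the same extreme points, by verifying that each qualified (anti-)majority reduced form makes two adjacent constraints bind and by ruling out, through explicit algebraic contradictions, any vertex at which two non-adjacent constraints bind; equality of the two polytopes then follows from equality of their vertex sets. You instead work in the dual (H-) representation: you compute the support function $h_{\mathcal{F}}(\alpha,\beta)=\sum_k\max\{c(k),0\}B(k)$ with $c(k)$ affine in $k$, observe that in the $2(n+1)$ directions $\pm\big((n-j)\pi,-j(1-\pi)\big)$ the coefficient function collapses to $\pm(k-j)$ so that (\ref{eq:e1})--(\ref{eq:e2}) are literally the supporting inequalities of $\mathcal{F}$ in those directions, and then use piecewise linearity of $h_{\mathcal{F}}$ on the cones of the normal fan (whose extreme rays are exactly those designated directions, with the full/empty active sets giving the two box-constraint cones) to deduce $\alpha Q(a)+\beta Q(b)\le h_{\mathcal{F}}(\alpha,\beta)$ for every $Q\in\mathcal{P}$ and every direction, hence $\mathcal{P}\subseteq\mathcal{F}$ by separation. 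Both proofs hinge on the same LP observation that the maximizers over the cube are qualified majority/anti-majority rules; your version avoids the paper's case analysis of binding constraints and is closer in spirit to the support-function approach of \citet{GK22} cited in the introduction, while the paper's vertex-comparison route has the side benefit of delivering Theorem \ref{theo:ext} (the extreme-point characterization) as an explicit intermediate result, which you would need a small additional step to recover. The bookkeeping you flag as the main obstacle (the cyclic order of the rays and the two degenerate cones) does check out: the rays $\big((n-j)\pi,-j(1-\pi)\big)$ sweep the fourth quadrant in order of $j$, their negatives the second quadrant, and the open first and third quadrants carry the full and empty active sets, where the $j\in\{0,n\}$ instances of (\ref{eq:e1})--(\ref{eq:e2}) reduce to $0\le Q(a),Q(b)\le 1$ exactly as you say.
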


The sufficiency part of proof of Theorem \ref{theo:main} and other results are in Appendix \ref{sec:app1}.
It is proved by first describing the extreme points of all reduced form implementable voting rules (Theorem \ref{theo:ext})
and then showing that the extreme points of the system (\ref{eq:e1}) and (\ref{eq:e2}) correspond
to exactly the same voting rules.

The reduced form implementable voting rules are described by $2(n+1)$ inequalities. Out of
this, four correspond to non-negativity of $Q(a),Q(b)$ and upper bounding of
$Q(a),Q(b)$ by $1$. The rest of the $2(n-1)$ inequalities restrict the space
of interim allocation probabilities in the unit square. To see this, consider the uniform prior (independent
prior) with $\pi=\frac{1}{2}$ and $n=3$. In this case, $(Q(a),Q(b))$ is reduced form implementable if and only if
\begin{align*}
2Q(a) - Q(b) &\le \frac{5}{4},~~~Q(a) - 2Q(b) \le \frac{1}{4},~~~Q(b) - 2Q(a) \le \frac{1}{4},~~~2Q(b) - Q(a) \le \frac{5}{4} \\
Q(a),Q(b) &\in [0,1]
\end{align*}

\begin{figure}
\centering
\includegraphics[width=3in]{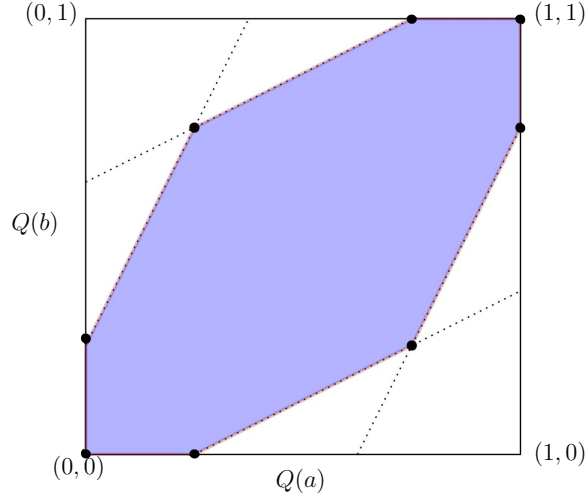}
\caption{Polytope of reduced form implementable voting rules}
\label{fig:main}
\end{figure}

The polytope enclosed by these inequalities is shown in Figure \ref{fig:main}. One sees 8 extreme
points of this polytope, two of them correspond to the constant allocation rules ($(0,0)$ correspond
to $b$ always chosen and $(1,1)$ correspond to $a$ always chosen). The rest of them
belong to a family of voting rules which we call qualified majority and qualified anti-majority.
We establish this result next. This allows us to show that any reduced form implementable
voting rule is ``equivalent" to a convex combination of voting rules from this set.

\begin{defn}
Two voting rules $q$ and $\hat{q}$ are {\bf reduced form equivalent} if they generate
the same interim allocation probabilities: $Q(a)=\widehat{Q}(a)$ and $Q(b)=\widehat{Q}(b)$.
\end{defn}

We now introduce two classes of voting rules which will be useful to describe the extreme points
of reduced form implementable voting rules.
\begin{defn}
A voting rule $q^+$ is a {\bf qualified majority} if there exists $j\in\{0,\ldots,n\}$ such that for all $k \in \{0,\ldots,n\}$
\begin{align*}
q^+(k) =
\begin{cases}
1 & \textrm{if}~k \ge j \\
0 & \textrm{otherwise}
\end{cases}
\end{align*}
We call such a voting rule a qualified majority with quota $j$.

 A voting rule $q^-$ is {\bf qualified anti-majority} if there exists $j\in\{0,\ldots,n\}$ such that for all $k \in \{0,\ldots,n\}$
\begin{align*}
q^-(k) =
\begin{cases}
1 & \textrm{if}~k < j \\
0 & \textrm{otherwise}
\end{cases}
\end{align*}
We call such a voting rule a qualified anti-majority with quota $j$.
\end{defn}
The definition of qualified majority is similar to \citet{AK14}. The only
difference is that if
quota is $j$, they allow $q^+(j)$ to take any value in $[0,1]$, but we break the
tie deterministically.

If $q^j$ is a qualified majority with quota $j$, then its reduced form probabilities are
\begin{align*}
Q^j(a) &= \frac{1}{n\pi} \sum_{k=0}^n k q^j(k) B(k) = \frac{1}{n\pi} \sum_{k=j}^n k B(k) \\
Q^j(b) &= \frac{1}{n(1-\pi)} \sum_{k=0}^n (n-k) q^j(k) B(k) = \frac{1}{n(1-\pi)} \sum_{k=j}^{n} (n-k) B(k) \\
\end{align*}
 Notice that when $j=0$, we have $Q^0(a)=Q^0(b)=1$. This corresponds to the constant voting rule
where $a$ is chosen at every type profile.

If $\bar{q}^j$ is a qualified anti-majority with quota $j$, then its reduced form probabilities are
\begin{align*}
\overline{Q}^j(a) &= \frac{1}{n\pi} \sum_{k=0}^n k \bar{q}^j(k) B(k) = \frac{1}{n\pi}\sum_{k=0}^{j-1} k B(k) \\
\overline{Q}^j(b) &= \frac{1}{n(1-\pi)} \sum_{k=0}^n (n-k) \bar{q}^j(k) B(k) = \frac{1}{n(1-\pi)} \sum_{k=0}^{j-1} (n-k) B(k) \\
\end{align*}

Denote the set of all qualified majority voting rules by $\mathcal{Q}^+$ and the set of
all qualified anti-majority voting rules by $\mathcal{Q}^-$. Notice that when $j=0$, we have
$\overline{Q}^0(a)=\overline{Q}^0(b)=0$. This corresponds to the constant voting rule where $b$ is chosen at every
type profile. Hence, $\mathcal{Q}^+ \cup \mathcal{Q}^-$ contains the two constant voting rules.

\begin{theorem}
\label{theo:ext}
Every symmetric voting rule is reduced-form equivalent to a convex combination of voting rules
in $\mathcal{Q}^+ \cup \mathcal{Q}^-$.
\end{theorem}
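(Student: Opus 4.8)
## Proof Proposal for Theorem \ref{theo:ext}

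The plan is to work directly in the two-dimensional interim space $(Q(a),Q(b))$ and exploit the polytope structure established by Theorem \ref{theo:main}. Since the map $q \mapsto (Q(a),Q(b))$ is linear, the image of the (compact, convex) set of all symmetric voting rules is exactly the polytope $P$ defined by inequalities \eqref{eq:e1}–\eqref{eq:e2}. Two voting rules are reduced-form equivalent precisely when they map to the same point of $P$, so it suffices to show that every point of $P$ is a convex combination of the reduced-form points $\{Q^j : j=0,\ldots,n\}$ and $\{\overline{Q}^j : j = 0,\ldots,n\}$ of the qualified majority and qualified anti-majority rules. Equivalently, I must show $P$ equals the convex hull of these $2(n+1)$ points (with the two constant rules $Q^0 = (1,1)$ and $\overline{Q}^0 = (0,0)$ shared across the two families).

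First I would verify that each point $Q^j$ satisfies all the inequalities \eqref{eq:e1}–\eqref{eq:e2}, i.e. that $Q^j \in P$; this is a routine substitution using the closed forms for $Q^j(a),Q^j(b)$ given just above the theorem, together with the identity $\sum_k B(k) = 1$, $\sum_k kB(k) = n\pi$. The same for $\overline{Q}^j$. Hence $\mathrm{conv}\big(\{Q^j\}\cup\{\overline{Q}^j\}\big) \subseteq P$. For the reverse inclusion, the cleanest route is to identify these $2(n+1)$ points as exactly the vertices of $P$: since $P$ is a two-dimensional polytope, it is the convex hull of its vertices, so showing every vertex of $P$ lies in the list forces $P \subseteq \mathrm{conv}(\{Q^j\}\cup\{\overline{Q}^j\})$. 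A vertex of $P$ in $\mathbb{R}^2$ is the intersection of two of the bounding lines. The natural pairing is: the $j$-th inequality in \eqref{eq:e1} and the $j$-th inequality in \eqref{eq:e2} are, respectively, the tight constraints for $q(k) = 0$ for $k < j$ / $q(k)=1$ for $k \ge j$ (the qualified majority $q^j$) and for the complementary choice (the qualified anti-majority). Tracing through the arithmetic, the line from the $j$-th inequality of \eqref{eq:e1} and the line from the $(j{+}1)$-th inequality of \eqref{eq:e2} (or a similarly adjacent pair) intersect exactly at $Q^j$ or $\overline{Q}^j$, and consecutive such vertices are joined by edges; ordering the $j$'s traces out the boundary of $P$.

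Rather than doing all that line-intersection bookkeeping by hand, a slicker argument — and the one I would actually write — is a dimension/counting argument on extreme points. By standard linear programming theory, an extreme point of the feasible region $\{q : 0 \le q(k) \le 1,\ q \text{ implements } Q\}$ for fixed $Q$ has at most two coordinates $q(k)$ strictly interior (because there are only two equality constraints pinning $Q(a),Q(b)$); and an extreme point of the full reduced-form-implementable set (varying $Q$) has at most \emph{zero or one} interior coordinate. A $0/1$-valued symmetric voting rule that is monotone in $k$ is exactly a qualified majority; one that is ``anti-monotone'' is a qualified anti-majority. The content is that \emph{every} extreme $0/1$ rule — not just the monotone ones — is reduced-form equivalent to one of these two shapes. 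This is where the real work lies: I would argue that if a $0/1$ rule $q$ has $q(k)=1$, $q(k')=0$ for some $k < k'$ with also $q(\ell)=0$, $q(\ell')=1$ for some $\ell < \ell'$ (a ``double alternation''), then one can find a small perturbation — shifting mass from $q$ to the two deterministic rules obtained by sorting the $1$'s to the top and to the bottom — keeping $(Q(a),Q(b))$ fixed, contradicting extremality. The key lemma, then, is: \emph{the extreme points of the reduced-form polytope are precisely $\{Q^j\} \cup \{\overline{Q}^j\}$}, which is Theorem \ref{theo:ext} restated. The main obstacle is making the perturbation/exchange argument precise — showing that any ``alternating'' $0/1$ profile can be written as a nontrivial convex combination of a qualified majority point and a qualified anti-majority point with the same interim values — essentially a two-dimensional Carathéodory-type argument exploiting that $B(k) \ge 0$ and the monotonic structure of the partial sums $\sum_{k \ge j} kB(k)$ and $\sum_{k\ge j}(n-k)B(k)$ in $j$.
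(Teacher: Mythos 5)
Your proposal does not contain a complete proof, and both routes you sketch have a genuine gap. The first route is circular: you invoke Theorem \ref{theo:main} to identify the image of the set of voting rules with the polytope cut out by (\ref{eq:e1})--(\ref{eq:e2}), but at this stage only the \emph{necessity} of those inequalities is available; the sufficiency half of Theorem \ref{theo:main} is exactly what the paper derives \emph{from} Theorem \ref{theo:ext}, so you may not assume it here without supplying an independent proof of sufficiency --- which is the hard part of the whole exercise. The second, ``slicker'' route stops precisely where the work begins: the exchange/perturbation lemma (that any $0/1$ rule exhibiting a double alternation has a reduced form expressible as a nontrivial convex combination of quota-rule reduced forms) is, as you yourself concede, ``Theorem \ref{theo:ext} restated,'' and you give no argument for it. Note also that your concrete suggestion --- mixing only the two rules obtained by sorting the $1$'s to the top and to the bottom --- cannot work in general: the image polytope is two-dimensional, so matching both $Q(a)$ and $Q(b)$ generically requires combinations of up to three quota rules, and identifying which ones is essentially the content of the theorem. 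A local perturbation of $q$ itself is also unavailable, since a $0/1$ rule sits at a vertex of the box $[0,1]^{n+1}$; what must be shown is that its \emph{image} is not extreme, which requires exhibiting other feasible rules, not perturbing $q$.

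The paper's proof avoids both problems with a supporting-hyperplane (linear programming) argument that you do not use: for any weights $(\mu_a,\mu_b)$, the objective $\mu_a Q(a)+\mu_b Q(b)$, rewritten in $q$-space, assigns to $q(k)$ the coefficient $\bigl[n\hat{\mu}_b + k(\hat{\mu}_a-\hat{\mu}_b)\bigr]B(k)$ with $\hat{\mu}_a=\mu_a/(n\pi)$, $\hat{\mu}_b=\mu_b/(n(1-\pi))$. Since $n\hat{\mu}_b + k(\hat{\mu}_a-\hat{\mu}_b)$ is affine, hence monotone, in $k$, its sign changes at most once, so some qualified majority (if increasing) or qualified anti-majority (if decreasing) maximizes every such objective over the box; consequently every extreme point of the image polytope is the reduced form of a quota rule, and for generic $(\mu_a,\mu_b)$ each quota rule is the unique maximizer, so these reduced forms are exactly the extreme points. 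If you want to salvage your second route, the missing exchange lemma is most easily proved by exactly this observation --- at which point you have reproduced the paper's argument rather than found an alternative one.
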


We compare our results to some of the results in \citet{AK14}. They consider
a cardinal voting model with two alternatives, where type of an agent (a one-dimensional number
with finite support) gives cardinal utilities of two alternatives. They consider cardinal voting 
rules and Bayesian incentive compatibility (BIC). They have two main results with symmetric cardinal
voting rules: (a) a utilitarian maximizer in the class of BIC and symmetric rules is a qualified majority; (b) an 
interim efficient, BIC and symmetric rule is a qualified majority.\footnote{They have analogues
of these results without symmetry too. A {\sl weighted majority} rule is interim efficient and BIC. 
Similarly, a weighted majority rule is utilitarian maximizer in the class of BIC rules.}

While related, their results and our results are not comparable. First, we only consider ordinal
voting rules, while they allow for cardinal rules. Second, the types of agents in their 
model are independent while we allow for correlated types -- exchangeable distributions allow for
correlation.

Third, Theorem \ref{theo:ext} says 
that the extreme points of the set of reduced form implementable voting rules consist 
of qualified majority and qualified anti-majority rules. We do not require incentive compatibility or any
additional axiom (like interim efficiency) for this result. In the next section, we will impose
monotonicity (equivalent to dominant strategy incentive compatibility) of voting rules, 
and show that the the extreme points of the set of monotone reduced form implementable voting rules consist
of qualified majority rules and a constant rule. As we discuss in Section \ref{sec:raw}, 
our results are useful in settings where the objective function of the planner is not linear.

Finally, 
we explore the consequences of imposing unanimity on the reduced form implementation in Section
\ref{sec:unan}. Unanimity is a much weaker axiom than interim efficiency used in \citet{AK14}. Theorem \ref{theo:uext}
describes the extreme points of reduced form implementable rules satisfying unanimity and this contains 
rules that are {\sl not} qualified majority.

\section{Monotone reduced form implementation}
\label{sec:mon}

A natural restriction on voting rules is monotonicity. Formally, a symmetric voting rule $q$
is {\sl monotone} if $q(k) \ge q(k-1)$ for all $k \in \{1,\ldots,n\}$.
Monotonicity is equivalent to {\sl strategy-proofness} or {\sl dominant strategy incentive
compatibility} in voting models with
two alternatives.

\begin{defn}
Interim allocation probabilities $Q \equiv (Q(a),Q(b)) \in [0,1]^2$ is {\bf reduced form monotone implementable} if there exists a monotone voting rule $q$ whose interim allocation probabilities equal $Q$.
\end{defn}

With the help of our main results, we can characterize the reduced form monotone implementable
interim allocation probabilities.

\begin{theorem}
\label{theo:monimp}
Let $Q \equiv (Q(a), Q(b))$ be any interim allocation probabilities. Then, the following statements are equivalent.

\begin{enumerate}

\item $Q$ is reduced form monotone implementable.

\item $Q$ is reduced form implementable and $Q(a) \ge Q(b)$.

\item $Q$ is reduced form implementable by convex combination of
qualified majority voting rules and a constant voting rule that selects $b$ at all type profiles.

\item $Q$ satisfies
\begin{align}
j (1-\pi)Q(b) - (n-j)\pi Q(a) + \sum_{k=j}^n (k-j) B(k) &\ge 0 \qquad~\forall~j \in \{0,\ldots,n \} \label{eq:mm1} \\
Q(a) - Q(b) &\ge 0 \label{eq:mm2}
\end{align}

\end{enumerate}
\end{theorem}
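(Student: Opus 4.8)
The plan is to prove the equivalence by establishing a cycle of implications, $(1)\Rightarrow(2)\Rightarrow(4)\Rightarrow(3)\Rightarrow(1)$, leaning heavily on Theorem~\ref{theo:main} (which already gives the inequalities \eqref{eq:e1}--\eqref{eq:e2}) and on Theorem~\ref{theo:ext}. The implication $(3)\Rightarrow(1)$ is essentially immediate: a qualified majority rule $q^j$ is monotone, the constant rule $q\equiv 0$ is monotone, and a convex combination of monotone symmetric voting rules is again a monotone symmetric voting rule; so if $Q$ is a convex combination of the reduced forms of such rules, it is the reduced form of their (monotone) convex combination. The implication $(4)\Rightarrow(3)$ will be the geometric heart of the argument and I expect it to be the main obstacle (see below).

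For $(1)\Rightarrow(2)$: if $q$ is monotone, then $Q$ is trivially reduced form implementable, so it remains to show $Q(a)\ge Q(b)$. Writing out the definitions,
\begin{align*}
n\pi(1-\pi)\big(Q(a)-Q(b)\big) &= (1-\pi)\sum_{k=0}^n k q(k)B(k) - \pi \sum_{k=0}^n (n-k)q(k)B(k).
\end{align*}
Using $\sum_k k B(k) = n\pi$, hence $\sum_k (n-k)B(k) = n(1-\pi)$, one can rewrite this as a weighted sum of $q(k)$ against the ``centered'' weights $\big(k(1-\pi) - (n-k)\pi\big)B(k) = (k-n\pi)B(k)$, i.e.
\[
n\pi(1-\pi)\big(Q(a)-Q(b)\big) = \sum_{k=0}^n (k-n\pi)\,q(k)\,B(k).
\]
Since $\sum_k (k-n\pi)B(k)=0$, the weights $(k-n\pi)B(k)$ sum to zero and are increasing in $k$ (they go from negative to positive exactly once), so pairing the nonnegative tail against the nonpositive head and using monotonicity $q(k)\ge q(k')$ for $k\ge k'$ yields $\sum_k (k-n\pi)q(k)B(k)\ge 0$. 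This is a standard ``Chebyshev/FKG-type'' rearrangement argument and should be written out cleanly but is routine.

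For $(2)\Leftrightarrow(4)$: this is just Theorem~\ref{theo:main} together with the observation that, given \eqref{eq:e1}, the second family \eqref{eq:e2} is equivalent in the presence of $Q(a)\ge Q(b)$ to the single inequality \eqref{eq:mm2}. Concretely, I would show that under $Q(a)\ge Q(b)$ and \eqref{eq:mm1}, each inequality in \eqref{eq:e2} is automatically satisfied; the cleanest route is to add an appropriate nonnegative multiple of \eqref{eq:mm1}-type terms to \eqref{eq:mm2} and use $\sum_{k=j}^n(k-j)B(k) + \sum_{k=0}^j(j-k)B(k) = \sum_k |k-j|B(k)$ together with $\sum_k(k-j)B(k) = n\pi - j$ to telescope. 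The remaining work, and the hard part, is $(4)\Rightarrow(3)$: one must show that the polytope $P$ cut out by \eqref{eq:mm1} and \eqref{eq:mm2} (inside the unit square) has as its vertices exactly the reduced-form points $\{(Q^j(a),Q^j(b)): j=0,\dots,n\}$ of the qualified majority rules together with $(0,0)$. I would do this by the same strategy the paper uses for Theorem~\ref{theo:main}: first identify the extreme points of the set of monotone reduced form implementable voting rules directly --- a monotone symmetric $q$ is a convex combination of the ``step'' rules $q^j$ and the zero rule, since the monotone $q$'s form a simplex-like set whose extreme points are the $0$--$1$ nondecreasing step functions --- so every monotone $Q$ is a convex combination of $(Q^j(a),Q^j(b))$ and $(0,0)$; then check that these finitely many points all satisfy \eqref{eq:mm1}--\eqref{eq:mm2} and that, conversely, each vertex of $P$ coincides with one of them, e.g. by showing each vertex of $P$ is determined by making two of the constraints tight and matching that to the unique $j$ (or the corner) where two consecutive qualified-majority points and the binding inequality \eqref{eq:mm1} for that $j$ agree. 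The delicate bookkeeping --- that the $j$-th inequality in \eqref{eq:mm1} is exactly the one that is tight for $Q^j$ and $Q^{j+1}$, so consecutive qualified-majority points are adjacent vertices --- is where care is needed, but it mirrors the corresponding computation already carried out for Theorem~\ref{theo:ext}.
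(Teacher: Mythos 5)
Your outline is, in substance, the paper's own proof with the cycle reordered: the paper runs $(1)\Rightarrow(2)\Rightarrow(3)\Rightarrow(4)\Rightarrow(1)$, proves $(1)\Rightarrow(2)$ by exactly your centered-weight identity $n\pi(1-\pi)\big(Q(a)-Q(b)\big)=\sum_{k}(k-n\pi)q(k)B(k)$ (it bounds the sum below by comparing each $q(k)$ with $q(\lfloor n\pi\rfloor)$ rather than pairing tail against head, but that is the same rearrangement idea), gets your $(3)\Rightarrow(1)$ observation from monotonicity being preserved under mixing, and for the geometric step it uses the same two ingredients you need: the binding-constraint vertex analysis inherited from the proof of Theorem \ref{theo:main}, and the fact that the only extreme points weakly below the line $Q(a)=Q(b)$ are the two constant rules. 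The routing difference is that the paper carries the decomposition through $(2)\Rightarrow(3)$ via Theorem \ref{theo:ext} (write $Q$ as a mixture of qualified majority and anti-majority points and compute that every anti-majority rule with quota $j\ge 1$ has $\overline{Q}^j(a)<\overline{Q}^j(b)$), whereas you put the load on $(4)\Rightarrow(3)$ by characterizing the vertices of the polytope cut out by \eqref{eq:mm1}--\eqref{eq:mm2}, which is also what the paper's own $(4)\Rightarrow(1)$ step invokes.

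Two concrete points to settle when writing it out. First, in your vertex analysis for $(4)\Rightarrow(3)$ you only discuss pairs of binding constraints from \eqref{eq:mm1}; you must also dispose of putative vertices where \eqref{eq:mm2} binds together with some constraint of \eqref{eq:mm1}. This is where the paper's anti-majority sign computation reappears in your route: along $Q(a)=Q(b)=t$ the $j$-th constraint of \eqref{eq:mm1} evaluates to $(1-t)\sum_{k\ge j}(k-j)B(k)+t\sum_{k\le j}(j-k)B(k)$, which is strictly positive for $t\in(0,1)$ (barring a prior degenerate at $j$), so the only vertices on the line are $(0,0)$ and $(1,1)$ -- the constant-$b$ rule and the quota-$0$ qualified majority -- exactly what statement (3) allows. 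Second, the backward half of your ``$(2)\Leftrightarrow(4)$'' (deriving \eqref{eq:e2} from \eqref{eq:mm1}--\eqref{eq:mm2}) is redundant in your cycle, since $(4)\Rightarrow(3)\Rightarrow(1)\Rightarrow(2)$ already delivers it; if you want it anyway, it is simpler than your telescoping hint suggests: for $j\le n\pi$, $(n-j)\pi Q(a)-j(1-\pi)Q(b)\ge (n\pi-j)Q(b)\ge 0$ using only $Q(a)\ge Q(b)\ge 0$, and for $j\ge n\pi$, $(n-j)\pi Q(a)-j(1-\pi)Q(b)\ge (n\pi-j)Q(a)\ge n\pi-j$ using $Q(b)\le Q(a)\le 1$, after which adding $\sum_{k\le j}(j-k)B(k)$ and the identity $\sum_k(k-j)B(k)=n\pi-j$ leaves $\sum_{k\ge j}(k-j)B(k)\ge 0$. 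With the line-segment check added, your plan goes through.
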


We make two remarks about Theorem \ref{theo:monimp}. 

\begin{itemize}

\item {\bf Equivalence of notions of IC under independent priors.} Note that Theorem \ref{theo:monimp} holds for correlated (exchangeable) priors. The equivalence of (1) and (2) in Theorem \ref{theo:monimp} is 
related to equivalence of strategy-proof and Bayesian incentive compatibility 
in some mechanism design models with independent priors~\citep{MV10,GKMS13}. To understand
this better, consider a natural notion of Bayesian incentive compatibility in ordinal mechanisms.
{\sl Ordinal Bayesian incentive compatibility (OBIC)} requires that the
truthtelling lottery first-order stochastically dominates any lottery that can be obtained
by a misreport \citep{DP88,MS04,M16}.

Formally, fix a voting rule $q$. 
Let $Q(x|y)$ denote
the interim probability of getting $a$ by reporting $x$ in the voting rule when true type is $y$. So, for an $a$-type 
agent with utilities $u(a)$ and $u(b)$ for $a$ and $b$ respectively (with $u(a) > u(b)$ since the agent is $a$-type), the IC constraint is
\begin{align*}
u(a) Q(a|a) + u(b) (1-Q(a|a)) & \ge u(a) Q(b|a) + u(b) (1-Q(b|a)) \\
\Leftrightarrow (u(a)-u(b)) Q(a|a) &\ge (u(a)-u(b)) Q(b|a) \\
\Leftrightarrow Q(a|a) &\ge Q(b|a)
\end{align*}
where the last equivalent inequality follows because $u(a) > u(b)$.
Similarly, the IC constraint for $b$-type is $1-Q(b|b) \ge 1-Q(a|b)$ or $Q(a|b) \ge Q(b|b)$.

If prior is independent,
then $Q(x|y)=Q(x)$. Then, OBIC is equivalent to requiring $Q(a) \ge Q(b)$.
This is the constraint in (2) and (4) of Theorem \ref{theo:monimp}.
Hence, by Theorem \ref{theo:monimp}, we have the following corollary.

\begin{cor}
\label{cor:equiv}
Suppose the prior  is independent and $Q \equiv (Q(a), Q(b))$ be any interim allocation probabilities.
Then, each of (1) to (4) in Theorem \ref{theo:monimp} is equivalent to the following statement
\begin{itemize}
\item $Q$ is reduced form implementable by an OBIC voting rule.
\end{itemize}
\end{cor}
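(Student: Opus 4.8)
The plan is to observe that, under an independent prior, OBIC of a voting rule is a property of its reduced form alone, and then to invoke Theorem \ref{theo:monimp}.

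First I would record the (standard) fact that with an independent prior $Q(x|y)=Q(x)$ for all $x,y\in\{a,b\}$: by anonymity, the interim probability that $a$ is chosen when an agent reports $x$ is a function only of the distribution of the number of $a$-reports among the other $n-1$ agents, and under independence this distribution is binomial with $n-1$ trials and success probability $\pi$ regardless of the agent's true type $y$. Combining this with the computation already carried out in the text, OBIC of $q$ for an $a$-type reduces to $Q(a|a)=Q(a)\ge Q(b)=Q(b|a)$, and for a $b$-type to $Q(a|b)=Q(a)\ge Q(b)=Q(b|b)$; both constraints collapse to the single inequality $Q(a)\ge Q(b)$. Hence a voting rule is OBIC if and only if its interim allocation probabilities satisfy $Q(a)\ge Q(b)$.

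Next I would translate the new statement. ``$Q$ is reduced form implementable by an OBIC voting rule'' means there is a voting rule $q$ with reduced form equal to $Q$ such that $q$ is OBIC. By the previous paragraph, OBIC-ness of $q$ is equivalent to $Q(a)\ge Q(b)$, which is a condition on $Q$ only. So the new statement holds if and only if $Q$ is reduced form implementable (there exists such a $q$) and in addition $Q(a)\ge Q(b)$ --- which is precisely statement (2) of Theorem \ref{theo:monimp}. Since Theorem \ref{theo:monimp} already establishes the equivalence of (1)--(4), the corollary follows.

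There is no substantial obstacle here; the only point requiring care is the justification of $Q(x|y)=Q(x)$ under independence --- i.e.\ that reporting a type other than one's own does not change the conditional distribution over the other agents' reports --- after which the corollary is immediate from Theorem \ref{theo:monimp}.
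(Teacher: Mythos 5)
Your proposal is correct and follows essentially the same route as the paper: the text preceding the corollary notes that under an independent prior $Q(x|y)=Q(x)$, so OBIC collapses to the reduced-form condition $Q(a)\ge Q(b)$, and the corollary is then immediate from statement (2) of Theorem \ref{theo:monimp}. Your additional justification of $Q(x|y)=Q(x)$ (own report does not affect the conditional distribution of others' types under independence) is a fine, slightly more explicit version of the step the paper takes for granted.
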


By the equivalence of (1) and (2) in Theorem \ref{theo:monimp}, Corollary \ref{cor:equiv} implies that every OBIC voting rule is reduced-form
equivalent to a strategy-proof voting rule under independent priors. This OBIC and strategy-proof equivalence result is a corollary of an important (and more general) result on equivalence of strategy-proof and Bayesian incentive compatible mechanism with independent types in \citet{GKMS13}.
Corollary \ref{cor:equiv} describes the reduced form inequalities that characterize OBIC voting rules
with independent priors and shows that they are the same reduced form inequalities 
that describe monotone voting rules.

In voting models with at least three alternatives, {\sl ex-post} equivalence of deterministic strategy-proof and OBIC voting 
rules is established
for generic independent priors in \citet{MS04} and \citet{M16} under unanimity constraints. 

\item {\bf Extreme points of voting rules.} A voting rule $q$ is {\sl extreme}
if there does not exist a pair of voting rules $\bar{q}$ and $\tilde{q}$ such that for some $\lambda \in (0,1)$,
$q(k) = \lambda \bar{q}(k) + (1-\lambda) \tilde{q}(k)$ for all $k$.
Let $\mathcal{Q}^{ex}$ be the set of all extreme voting rules.

A voting rule $q$ is {\sl reduced-form extreme}
if there does not exist a pair of voting rules $\bar{q}$ and $\tilde{q}$ with interim allocation 
probabilities $\bar{Q}$ and $\tilde{Q}$ respectively, such that for some $\lambda \in (0,1)$,
$Q(x) = \lambda \bar{Q}(x) + (1-\lambda) \tilde{Q}(x)$ for all $x \in \{a,b\}$. Let $\mathcal{Q}^{rex}$
be the set of all reduced-form extreme voting rules. By Theorem \ref{theo:ext},
$\mathcal{Q}^{rex} = \mathcal{Q}^+ \cup \mathcal{Q}^-$.

It is easy to see that {\sl every} deterministic voting rule is an extreme voting rule, i.e., 
belongs to $\mathcal{Q}^{ex}$.
For instance, suppose $n=4$, a voting rule that chooses $b$ if there are exactly two
$a$-types and chooses $a$ otherwise belongs to $\mathcal{Q}^{ex}$. However
this voting rule is neither a qualified majority nor a qualified anti-majority. Hence,
it does not belong to $\mathcal{Q}^{rex}$, and hence, we have $\mathcal{Q}^{rex} \subsetneq \mathcal{Q}^{ex}$.
That is, the set of extreme points of voting rules in the reduced form is a strict subset
of the set of extreme points of voting rules in the ex-post form. This difference
disappears once we impose monotonicity.

To see this, let $\mathcal{Q}^{mex}$ denote the set of monotone extreme voting rules
and $\mathcal{Q}^{mrex}$ denote the set of monotone reduced-form extreme voting rules.
By Theorem \ref{theo:monimp}, $\mathcal{Q}^{mrex}$ consists of qualified majority 
voting rules and the constant voting rule that selects $b$ at all type profiles.
\citet{PS12} show that $\mathcal{Q}^{mex}$ consists of the same set of voting rules.\footnote{To be precise,
\citet{PS12} do not restrict attention to symmetric voting rules and characterize
the extreme points of all monotone voting rules as the set of {\sl voting by committee} rules
introduced in \citet{BSZ91}. Imposing
symmetry gives us the required set of symmetric monotone extreme voting rules.} Hence,
we can conclude that $\mathcal{Q}^{mex}=\mathcal{Q}^{mrex}$.

\end{itemize}

\subsection{Application: Rawlsian rule}
\label{sec:raw}

In this section, we apply Theorem \ref{theo:monimp} to characterize an {\sl ex-ante Rawlsian rule}. We say an agent is ``satisfied" if its
top ranked alternative is chosen.  An ex-ante Rawlsian rule maximizes the minimum number
of satisfied agents between $a$-types and $b$-types over all monotone voting rules. Formally, fix any voting rule $q$.  The expected number
of $a$-type satisfied agents  is 
$$\sum_{k=0}^n k q(k) B(k) = n\pi Q(a)$$
Similarly,  the expected number of 
$b$-type satisfied agents is 
$$\sum_{k=0}^n (n-k) (1-q(k)) B(k) = n(1-\pi) (1-Q(b))$$
An {\sl ex-ante Rawlsian}   rule maximizes the minimum number
of satisfied agents between $a$-types and $b$-types.

\begin{defn}
A monotone voting rule $q^R$ is {\bf ex-ante Rawlsian} if for every monotone voting rule $q$,
\begin{align*}
\min \big(\pi Q^R(a), (1-\pi)(1-Q^R(b)\big) &\ge \min \big(\pi Q(a), (1-\pi) (1-Q(b)\big)
\end{align*}
\end{defn}

Using Theorem \ref{theo:monimp}, we provide a complete description of
the ex-ante Rawlsian rule: it is a convex combination of a pair of qualified majority
voting rules.
\begin{prop}
\label{prop:rawls}
The ex-ante Rawlsian rule $q^R$ is a convex combination of qualified majority
with quota $j^*$ and $(j^*+1)$, where
\begin{align}
j^*=  \max \{j\in\{0,\cdots,n\}: \sum_{k=j}^n  B(k) \geq 1-\pi\}
\end{align}
The interim allocation probabilities corresponding to $q^R$ are
\begin{align}
Q^R(a)&=\frac{1}{n\pi} \left(j^*(1-\pi)+ \sum_{k=j^*}^n (k-j^*) B(k)\right)\\
Q^R(b)&=\frac{1}{n(1-\pi)} \left((n-j^*)(1-\pi)- \sum_{k=j^*}^n (k-j^*)  B(k)\right)
\end{align}
\end{prop}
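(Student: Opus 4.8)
The plan is to use the equivalence of statements (1) and (3) in Theorem \ref{theo:monimp}: any monotone voting rule is reduced-form equivalent to a convex combination of qualified majority rules $\{q^j\}_{j=0}^n$ together with the constant rule choosing $b$; and since $q^0$ is the constant rule choosing $a$, while the constant-$b$ rule is the ``$j=n+1$'' degenerate qualified majority, the feasible interim vectors form the convex hull of the points $\{(Q^j(a),Q^j(b))\}$. So the ex-ante Rawlsian problem reduces to a finite optimization: maximize $\min(\pi Q(a),(1-\pi)(1-Q(b)))$ over this convex hull. The objective $\min(\pi Q(a),(1-\pi)(1-Q(b)))$ is concave, and $\pi Q(a)$ is increasing along the family (as the quota $j$ decreases, more $a$-types are satisfied) while $(1-\pi)(1-Q(b))$ is decreasing; hence the maximum of the min is attained where the two expressions are as close to equal as possible, i.e. at the ``crossing'' of the two curves along the edge of the hull.

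\textbf{Key steps.} First I would compute, for a qualified majority $q^j$ with quota $j$, the two quantities $\pi Q^j(a) = \frac1n\sum_{k=j}^n k B(k)$ and $(1-\pi)(1-Q^j(b)) = \frac1n\sum_{k=0}^{j-1}(n-k)B(k) = (1-\pi) - \frac1n\sum_{k=j}^n (n-k)B(k)$. Observe $\pi Q^j(a)$ is (weakly) decreasing in $j$ and $(1-\pi)(1-Q^j(b))$ is (weakly) increasing in $j$; at $j=0$ the first equals $\pi$ and the second equals $0$, at $j=n+1$ (constant $b$) the first is $0$ and the second is $(1-\pi)$. Next I would show that the difference $D(j) := \pi Q^j(a) - (1-\pi)(1-Q^j(b))$ equals $\frac1n\sum_{k=j}^n k B(k) + \frac1n\sum_{k=j}^n(n-k)B(k) - (1-\pi) = \sum_{k=j}^n B(k) - (1-\pi)$, so $D(j)\ge 0 \iff \sum_{k=j}^n B(k) \ge 1-\pi$, which is exactly the defining inequality for $j^*$: $D(j^*)\ge 0$ and $D(j^*+1) < 0$. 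Then, on the segment of the convex hull joining the point for $q^{j^*}$ to the point for $q^{j^*+1}$, both coordinates are affine in the mixing weight; along that segment $\pi Q(a)$ runs monotonically from $\pi Q^{j^*}(a)$ down to $\pi Q^{j^*+1}(a)$ and $(1-\pi)(1-Q(b))$ runs up correspondingly, so there is a unique weight at which the two expressions are equal, and at that weight the value $\min(\cdot,\cdot)$ is maximized (by concavity of $\min$ and monotonicity, no other point of the hull does better). Finally I would solve for that weight and verify it produces the stated $Q^R(a)$ and $Q^R(b)$: the equalized common value should be $\min(\pi Q^R(a),(1-\pi)(1-Q^R(b)))$, and a short computation from $\pi Q^R(a) = (1-\pi)(1-Q^R(b))$ together with $Q^R$ lying on the segment (which pins down $n\pi Q^R(a) = j^*(1-\pi) + \sum_{k=j^*}^n (k-j^*)B(k)$) gives the displayed formulas.

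\textbf{Main obstacle.} The genuinely delicate point is justifying that the maximizer lies on the single edge between quotas $j^*$ and $j^*+1$ rather than somewhere else on the boundary of the hull, and that mixing in qualified anti-majority rules (which are not available here, since we restrict to monotone rules) or other qualified majority rules cannot help. This is handled by the structural fact from Theorem \ref{theo:monimp} that the monotone reduced-form feasible set is exactly the convex hull of the ordered family $\{(Q^j(a),Q^j(b))\}_{j=0}^{n}$ plus the constant-$b$ point, that this family is ``comonotone'' in the sense that $Q^j(a)$ and $Q^j(b)$ both decrease in $j$, so the upper-left boundary of the hull — the only part relevant for maximizing $\min(\pi Q(a),(1-\pi)(1-Q(b)))$ — is precisely the polygonal path through consecutive quota points. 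Once this is in place, the concavity-of-min argument localizes the optimum to the edge where $D(j)$ changes sign, and the rest is the routine algebra of solving the linear equation $\pi Q^R(a) = (1-\pi)(1-Q^R(b))$ on that edge. I would also check the boundary cases $j^*=0$ (where $q^R$ is already a single qualified majority with $Q^R(a)=Q^R(b)=1$ forced to degenerate appropriately) and $j^*=n$ to confirm the formulas remain valid.
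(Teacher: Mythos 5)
Your route is genuinely different from the paper's and it leads to the right place. The paper does not argue over the hull of the quota points: it takes the inequality description in Theorem \ref{theo:monimp}(4), drops $Q(a)\ge Q(b)$, changes variables to $x=\pi Q(a)$, $y=(1-\pi)(1-Q(b))$, observes that replacing $(x,y)$ by $(\min(x,y),\min(x,y))$ preserves feasibility, and is left with maximizing $x$ subject to $nx\le H(j):=j(1-\pi)+\sum_{k= j}^n(k-j)B(k)$; the differences $H(j)-H(j-1)=1-\pi-\sum_{k= j}^n B(k)$ locate the minimizing quota at $j^*$, and monotonicity of the resulting mixture is verified at the end. You instead use the vertex description in Theorem \ref{theo:monimp}(3) and the identity $D(j)=\pi Q^j(a)-(1-\pi)(1-Q^j(b))=\sum_{k=j}^n B(k)-(1-\pi)$, which is exactly the paper's $H$-difference, so both proofs pin down $j^*$ the same way and your mixing weight reproduces the paper's $\alpha$. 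What your version buys is that no relaxation and no final monotonicity check are needed, since you optimize directly over the monotone feasible set.

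The one step that needs more than you give it is the localization: you justify ``the relevant boundary of the hull is the polygonal path through consecutive quota points'' by comonotonicity of $(Q^j(a),Q^j(b))$ in $j$. Comonotonicity alone is not enough: a family of points with both coordinates decreasing can have vertices sitting on the dominated side of the hull, in which case the path through consecutive points is not the Pareto frontier and the ``crossing'' argument could select the wrong edge; likewise, concavity of $\min$ by itself does not force the optimum to the equalization point (a convex set crossing the diagonal can have its maximum of $\min(x,y)$ far from the diagonal if both coordinates increase together along its frontier). What rescues the argument is a short computation you did not record: in the $(x,y)$ plane the edge joining quotas $j$ and $j+1$ has slope $-(n-j)/j$ (vertical at $j=0$, horizontal at the constant-$b$ end), which is monotone in $j$, so the path is the graph of a concave function of $x$ and is exactly the northeast boundary of the hull; along it $x$ falls and $y$ rises, so the maximum of $\min(x,y)$ sits where the path crosses the diagonal, i.e.\ on the edge between $j^*$ and $j^*+1$ because $D(j^*)\ge 0>D(j^*+1)$. (Equivalently, one can invoke the adjacent-binding-constraint structure from the proof of Theorem \ref{theo:main}.) With that inserted, your algebra gives the stated $Q^R$. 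A small correction to your boundary remark: when $j^*=0$ the optimum is still a nondegenerate mixture of quotas $0$ and $1$, with $Q^R(a)=1$ and $Q^R(b)=(1-2\pi)/(1-\pi)$, not $Q^R(a)=Q^R(b)=1$.
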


The optimal quota $j^*$ is determined by comparing the joint probability that at least $j^*$ agents is $a$-type and the marginal probability of $b$-type (which is $1-\pi$).  For qualified majority with quotas $j^*$ and $j^*+1$, the joint probability that at least $j^*$ agents is $a$-type is approximately equal to the ex ante probability that alternative $a$ is chosen from these rules. Then optimal quota $j^*$ is selected such that  the ex ante probability that alternative $a$ is chosen  is approximately equal to the marginal probability of $b$-type.

\section{Unanimity constraints}
\label{sec:unan}

We now impose a familiar axiom on the voting rule.
A voting rule $q$ is {\sl unanimous} if $q(n)=1$ and $q(0)=0$.   Unanimity imposes restrictions on the interim allocation probabilities. For instance, consider a unanimous
voting rule $q$. Then, its interim allocation probabilities must be
\begin{align*}
Q(a) &= \frac{1}{n\pi} \sum_{k=0}^{n}k q(k) B(k) = \frac{1}{n\pi} \big[\sum_{k=1}^{n-1}k q(k) B(k) + n B(n) \big]\\
Q(b) &= \frac{1}{n(1-\pi)} \sum_{k=1}^{n-1}(n-k) q(k)B(k)
\end{align*}
Hence, the reduced-form characterization changes as in the theorem below.
\begin{defn}
Interim allocation probabilities $Q(a),Q(b) \in [0,1]$ is {\bf reduced form unanimous (u-)implementable} if there exists a
unanimous voting rule $q$ whose interim allocation probabilities equal $Q$.
\end{defn}
Notice that $q$ is $(n-2)$-dimensional since the values of $q(0)$ and $q(n)$ are fixed.

\begin{theorem}
\label{theo:unan}
Interim allocation probabilities $Q$ is reduced form u-implementable if and only if
\begin{align}
j (1-\pi)Q(b) - (n-j)\pi Q(a) + \sum_{k=j}^n (k-j) B(k) &\ge 0 \qquad~\forall~j \in \{0,\ldots,n \} \label{eq:u1} \\
(n-j) \pi Q(a) - j (1-\pi)Q(b) + \sum_{k=0}^j (j-k) B(k) &\ge j \lambda(0) + (n-j)\lambda(n) \qquad~\forall~j \in \{0,\ldots,n \} \label{eq:u2}
\end{align}
\end{theorem}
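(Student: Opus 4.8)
The plan is to mirror the proof of Theorem \ref{theo:main}; the only structural change is that the voting rule now has two pinned coordinates, $q(0)=0$ and $q(n)=1$. For \emph{necessity}, I would rerun the ``cost $j$'' computation behind (\ref{eq:optn})--(\ref{eq:lub}), maximizing and minimizing $\sum_{k=0}^n (k-j)q(k)B(k)=(n-j)\pi Q(a)-j(1-\pi)Q(b)$ over \emph{unanimous} rules only. The unconstrained maximizer ($q(k)=1$ iff $k\ge j$) already satisfies $q(0)=0$ and $q(n)=1$: its coefficient on $q(0)$ is $-jB(0)\le 0$ and on $q(n)$ is $(n-j)B(n)\ge 0$. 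So the upper bound $\sum_{k=j}^n (k-j)B(k)$ is unchanged, which reproduces (\ref{eq:u1}) verbatim from (\ref{eq:e1}). The unconstrained minimizer ($q(k)=1$ iff $k<j$) instead wants $q(0)=1$ and $q(n)=0$, both forbidden by unanimity; forcing $q(0)=0$ raises the objective by $jB(0)=j\lambda(0)$ and forcing $q(n)=1$ raises it by $(n-j)B(n)=(n-j)\lambda(n)$, while $q(1),\dots,q(n-1)$ are set exactly as before. Hence the minimum over unanimous rules is $-\sum_{k=0}^j (j-k)B(k)+j\lambda(0)+(n-j)\lambda(n)$, which rearranges to (\ref{eq:u2}); the indices $j=0$ and $j=n$ specialize to $Q(a)\le 1$, $Q(b)\ge 0$, $\pi Q(a)\ge\lambda(n)$ and $(1-\pi)(1-Q(b))\ge\lambda(0)$.

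For \emph{sufficiency}, let $R$ be the set of reduced form u-implementable $Q$ and let $P\subseteq[0,1]^2$ be the polytope cut out by (\ref{eq:u1})--(\ref{eq:u2}). Since $R$ is the image of the compact polytope $\{q:q(0)=0,\ q(n)=1,\ 0\le q(k)\le 1\}$ under the linear map $q\mapsto(Q(a),Q(b))$, it is itself a polytope, and necessity gives $R\subseteq P$; so it suffices to show every vertex of $P$ lies in $R$. Following the route used for Theorem \ref{theo:main}, I would first establish the unanimous analogue of Theorem \ref{theo:ext}, namely Theorem \ref{theo:uext}, proved by the same extreme-point argument: the reduced-form extreme unanimous rules form an explicit finite family of threshold-type rules — the qualified majority rules together with the unanimity-patched qualified anti-majority rules. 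One then computes their reduced forms and checks, by ordering the bounding lines of (\ref{eq:u1})--(\ref{eq:u2}) by slope, that these points are precisely the vertices of $P$ and occur in convex position along its boundary. Since $R$ is convex and closed and contains every vertex of $P$, we conclude $P\subseteq R$, hence $R=P$.

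The necessity bookkeeping is routine; the substantive step is the vertex correspondence in the sufficiency argument, and two features distinguish the present case from Theorem \ref{theo:main}. First, the extreme unanimous rules are no longer the clean qualified (anti-)majority rules but their truncations, so their reduced forms and the incidences among the constraint lines of $P$ must be recomputed. Second, the right-hand side $j\lambda(0)+(n-j)\lambda(n)$ in (\ref{eq:u2}) is now $j$-dependent, so the affine bookkeeping that makes consecutive constraints meet at exactly these reduced-form points has to be redone, including the degenerate regimes ($\pi$ near $0$ or $1$) in which some of these constraints become slack or $P$ collapses to a lower-dimensional set. As the excerpt notes, this parallels the argument for Theorem \ref{theo:ext} and is carried out in Appendix \ref{sec:supapp}.
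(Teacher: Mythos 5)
Your route is the paper's route: necessity by bounding the linear functional $\sum_{k}(k-j)q(k)B(k)=(n-j)\pi Q(a)-j(1-\pi)Q(b)$ over unanimous rules, and sufficiency by showing the vertices of the polytope cut out by (\ref{eq:u1})--(\ref{eq:u2}) are exactly the reduced forms of u-qualified majority and u-qualified anti-majority rules, with Theorem \ref{theo:uext} obtained by the same linear-programming argument as Theorem \ref{theo:ext}. Your necessity half is complete and correct: the unconstrained maximizer is compatible with the pins $q(0)=0$, $q(n)=1$, so (\ref{eq:u1}) is unchanged, while pinning raises the minimum by exactly $j\lambda(0)+(n-j)\lambda(n)$, which is precisely how the paper derives (\ref{eq:u2}). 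The sufficiency half, however, is only a plan: the substantive content --- checking that each u-qualified (anti-)majority reduced form satisfies all of (\ref{eq:u1})--(\ref{eq:u2}) with two \emph{adjacent} constraints binding, and that no two non-adjacent constraints can bind simultaneously (the paper's case analysis, including the degenerate pairs $(j,\ell)=(n-1,1)$ and $(1,n-1)$, which still land on u-qualified rules) --- is exactly the work done in Appendix \ref{sec:supapp}, and you defer it rather than execute it. So there is no wrong step or missing idea, but the vertex-correspondence verification, which is the heart of the sufficiency direction, remains to be carried out; as written, your proposal proves necessity and correctly outlines, without completing, the paper's sufficiency argument.
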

The proofs of Theorem \ref{theo:unan} and Theorem \ref{theo:uext} are in Appendix \ref{sec:supapp}.
They are similar to Theorem \ref{theo:main} and Theorem \ref{theo:ext}.

For $n=3$ and the uniform prior with $\pi=\frac{1}{2}$, the set of reduced form u-implementable voting rules
are shown in the smaller polytope in Figure \ref{fig:unan}. It lies inside the polytope characterizing
the set of all reduced form implementable voting rules. This polytope has only four extreme points.
We characterize them next.

\begin{figure}
\centering
\includegraphics[width=5in]{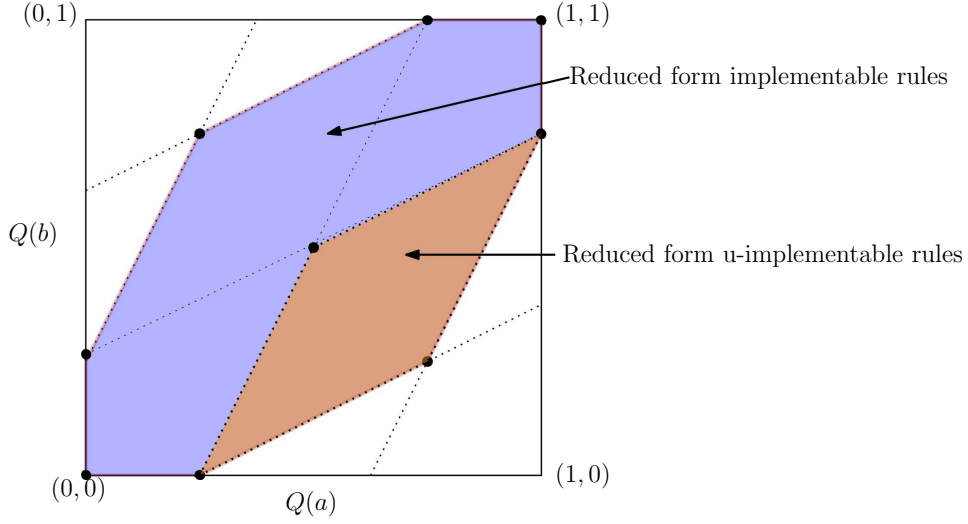}
\caption{Polytope of reduced form u-implementable voting rules}
\label{fig:unan}
\end{figure}

The extreme points of reduced-form implementable unanimous voting rules are
defined by two new families of unanimous voting rules.

\begin{defn}
A  voting rule $q^+_u$ is {\bf u-qualified majority} if it is a qualified majority with quota $j$, where
$j \in \{1,\ldots,n\}$. We call such a voting rule a u-qualified majority with quota $j$.

A voting rule $q^-_u$ is {\bf u-qualified anti-majority} if there exists $j \in \{1,\ldots,n\}$
\begin{align*}
q^-_u(k) =
\begin{cases}
1 & \textrm{if}~k \in \{1,\ldots,j-1\} \cup \{n\} \\
0 & \textrm{otherwise}
\end{cases}
\end{align*}
We call such a voting rule a u-qualified anti-majority with quota $j$.
\end{defn}

A u-qualified majority is just a non-constant qualified majority rule. On the other hand,
a u-qualified anti-majority is {\sl not} merely a non-constant qualified anti-majority.
A u-qualified anti-majority is constructed by taking a non-constant qualified anti-majority
and making it unanimous. For instance if $n=4$ and quota $j=2$,
a qualified anti-majority will set $q(0)=q(1)=1, q(2)=q(3)=q(4)=0$.
But a u-qualified anti-majority will set $q(0)=0, q(1)=1, q(2)=q(3)=0, q(4)=1$.

We write down the interim allocation probabilities of a u-qualified majority and u-qualified
anti-majority below. If $q^+_u$ is a u-qualified majority with quota $j$, then
\begin{align*}
Q^+_u(a) &= \frac{1}{n\pi} \sum_{k=j}^n k B(k) \\
Q^+_u(b) &= \frac{1}{n(1-\pi)} \sum_{k=j}^n (n-k)B(k)
\end{align*}
On the other hand, if $q^-_u$ is a u-qualified anti-majority with quota $j$, then
\begin{align*}
Q^-_u(a) &= \frac{1}{n\pi} \Big[\sum_{k=1}^{j-1} k B(k) + n B(n) \Big]\\
Q^-_u(b) &= \frac{1}{n(1-\pi)} \sum_{k=1}^{j-1} (n-k)B(k)
\end{align*}

Denote the set of all u-qualified majority voting rules by $\mathcal{Q}_u^+$ and the set of
all u-qualified anti-majority voting rules by $\mathcal{Q}_u^-$.
Notice that the u-qualified majority with quota $n$ and the
u-qualified anti-majority with quota $1$ are the same voting rules. Similarly, the u-qualified
majority with quota $1$ and the u-qualified anti-majority with quota $n$ are the same voting rules.
Hence, these two families of voting rules contain a total of $2(n-1)$ unanimous
voting rules.
The following theorem shows that they form the extreme points of all reduced form u-implementable
voting rules.

\begin{theorem}
\label{theo:uext}
Every symmetric and unanimous voting rule is reduced-form equivalent to a convex combination of
voting rules in $\mathcal{Q}_u^+ \cup \mathcal{Q}_u^-$.
\end{theorem}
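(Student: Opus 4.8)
The plan is to mirror the proof of Theorem \ref{theo:ext}, but working inside the affine subspace of voting rules pinned down by the unanimity constraints $q(0)=0$ and $q(n)=1$. Let $\mathcal{P}_u \subseteq [0,1]^{n-1}$ denote the polytope of (free coordinates of) symmetric unanimous voting rules; equivalently, $q$ ranges over the cube $[0,1]^{n+1}$ intersected with the two hyperplanes $q(0)=0$, $q(n)=1$. First I would observe that the reduced-form map $q \mapsto (Q(a),Q(b))$ is linear, so the image of $\mathcal{P}_u$ is a polytope, and every point in it is a convex combination of images of extreme points of $\mathcal{P}_u$. Hence it suffices to (i) identify the extreme points of $\mathcal{P}_u$, and (ii) show that each such extreme point is reduced-form equivalent to a convex combination of rules in $\mathcal{Q}_u^+ \cup \mathcal{Q}_u^-$. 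Actually, for the statement as phrased — every symmetric unanimous voting rule is reduced-form equivalent to a convex combination of $\mathcal{Q}_u^+ \cup \mathcal{Q}_u^-$ — it is enough to show the image of $\mathcal{P}_u$ under the reduced-form map is exactly the convex hull of the reduced forms $\{Q_u^+{}^j\} \cup \{Q_u^-{}^j\}$.

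The two inclusions go as follows. For the easy inclusion, each $q \in \mathcal{Q}_u^+ \cup \mathcal{Q}_u^-$ is itself a symmetric unanimous voting rule, so its reduced form lies in the image; since the image is convex, the whole convex hull of these finitely many reduced forms is contained in the image. For the reverse inclusion I would invoke Theorem \ref{theo:unan}: the image of $\mathcal{P}_u$ is contained in the polytope $\mathcal{R}_u$ cut out by \eqref{eq:u1}–\eqref{eq:u2}. So it suffices to show that every extreme point of $\mathcal{R}_u$ is the reduced form of some rule in $\mathcal{Q}_u^+ \cup \mathcal{Q}_u^-$. This is the two-dimensional analogue of the argument sketched for Theorem \ref{theo:main}: $\mathcal{R}_u$ is a polygon in $[0,1]^2$, each of its vertices is the unique solution of two of the binding inequalities among \eqref{eq:u1}–\eqref{eq:u2}, and one computes these intersection points explicitly. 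I would check that consecutive inequalities in the $j$-indexed family \eqref{eq:u1} (and \eqref{eq:u2}) intersect precisely at the points $(Q_u^+{}^j(a), Q_u^+{}^j(b))$ and $(Q_u^-{}^j(a), Q_u^-{}^j(b))$ respectively, using the closed forms for these reduced forms already recorded in the text (and noting the boundary identifications: quota $n$ on the majority side coincides with quota $1$ on the anti-majority side, and vice versa). Because each of these candidate points is exhibited as an honest reduced form of a unanimous rule, it automatically satisfies all of \eqref{eq:u1}–\eqref{eq:u2}, so it is indeed a vertex (and no spurious vertices arise).

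Concretely the verification of (ii) reduces to a pairing lemma: for each $j$, substituting $Q = Q_u^+{}^j$ makes \eqref{eq:u1} hold with equality for this particular $j$, and substituting $Q = Q_u^-{}^j$ makes \eqref{eq:u2} hold with equality for this particular $j$; then a discrete-convexity / single-crossing argument on the slopes $\tfrac{j(1-\pi)}{(n-j)\pi}$ of the lines in \eqref{eq:u1} as $j$ increases shows the upper-left boundary of $\mathcal{R}_u$ is the lower envelope traced by the points $Q_u^+{}^j$ in order, and symmetrically \eqref{eq:u2} traces the points $Q_u^-{}^j$. Since these are exactly the claimed generators, $\mathcal{R}_u$ equals their convex hull and the theorem follows. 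The main obstacle I anticipate is the bookkeeping in (ii): making sure the algebra of which two constraints are tight at each vertex lines up with the closed-form reduced forms, handling the $j=0$ and $j=n$ edge cases and the two pairwise identifications among the generators, and checking that the $+$ family and the $-$ family together close up the polygon without gaps or extra vertices — i.e., that the slopes are monotone in $j$ so the envelopes are genuine convex chains. This is routine but delicate, which is presumably why the authors relegate it to Appendix \ref{sec:supapp} as a parallel of Theorem \ref{theo:ext}.
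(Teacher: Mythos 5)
Your strategy is sound but it is a genuinely different route from the paper's proof of Theorem \ref{theo:uext}. The paper never touches the inequalities \eqref{eq:u1}--\eqref{eq:u2} here: it works directly in the space of unanimous rules $q$ (with $q(0)=0$, $q(n)=1$ pinned), optimizes an arbitrary linear objective $\mu_a Q(a)+\mu_b Q(b)$ rewritten in $q$-space, and observes that the coefficient $n\hat{\mu}_b+k(\hat{\mu}_a-\hat{\mu}_b)$ is monotone in $k$, so every such objective is maximized by a cutoff rule, i.e.\ by a u-qualified majority or u-qualified anti-majority; hence all extreme points of the projection $\mathcal{P}_u$ are generator reduced forms, with no vertex enumeration of a two-dimensional polygon needed. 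You instead route through the inequality polytope (the paper's $\mathcal{P}^*_u$): image $\subseteq \mathcal{P}^*_u$, conv(generators) $\subseteq$ image, and ``all vertices of $\mathcal{P}^*_u$ are generator reduced forms.'' That works, and once completed it delivers the sufficiency half of Theorem \ref{theo:unan} for free, but two cautions are essential. First, you may only invoke the \emph{necessity} direction of Theorem \ref{theo:unan} (which the paper proves by direct computation, independently of Theorem \ref{theo:uext}); citing the full theorem would be circular, since the paper's proof of its sufficiency half rests on Theorem \ref{theo:uext}. Second, the step you describe as routine bookkeeping -- showing no vertex arises from non-adjacent constraints within \eqref{eq:u1} or \eqref{eq:u2}, or from a cross pair -- is where essentially all the work lies; the paper handles it in the proof of Theorem \ref{theo:unan} by explicit case-by-case contradictions, and the cross-family cases are not vacuous: the pairs $(j,\ell)=(n-1,1)$ and $(1,n-1)$ do bind simultaneously and happen to give the u-qualified majority with quota $n$ and the u-qualified anti-majority with quota $n$, so your slope-monotonicity/envelope argument must be supplemented by this explicit check rather than dismissed by the boundary identifications. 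Your pairing claim itself (that \eqref{eq:u1} binds at $j$ and $j-1$ for the quota-$j$ u-qualified majority, and \eqref{eq:u2} at $j$ and $j-1$ for the quota-$j$ u-qualified anti-majority) is correct and matches the paper. In short: your plan is viable and closer in spirit to the paper's proofs of Theorems \ref{theo:main} and \ref{theo:unan}, whereas the paper's proof of Theorem \ref{theo:uext} is the shorter linear-programming argument in $q$-space that avoids the polygon analysis altogether.
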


\subsection{When are incentive constraints implied?}
\label{sec:icfree}

Corollary \ref{cor:equiv} (and \citet{GKMS13}) shows that for independent priors,  every OBIC voting rule is reduced form equivalent to a strategy-proof voting rule. This reduced form equivalence, however, fails with unanimity constraint, i.e., not every OBIC and unanimous voting rule is reduced form equivalent to a strategy-proof and unanimous voting rule. The following example presents an OBIC and unanimous voting rule that is not reduced form equivalent to a strategy-proof and unanimous voting rule.

\begin{example}\label{ex:obiuna}
\end{example}
Suppose $n=3$ and the prior is independent with  $\pi=\frac{1}{2}$: so, $B(0)=\frac{1}{8}, B(1)=\frac{3}{8}, B(2)=\frac{3}{8}, B(3)=\frac{1}{8}$. Consider $Q(a)=Q(b)= \frac{1}{2}$.  Then $Q$ is OBIC. We show that  $Q$ is implementable by a unique unanimous voting rule, but it is not strategy-proof. Let $q$ be any unanimous rule that implements $Q$. Then $q$ satisfies
\begin{align*}
Q(a) &=\frac{1}{3\pi}[ \sum_{k=1}^{2}kq(k)B(k)+3B(3)]=\frac{1}{4}\Big(q(1)+2q(2)+1\Big)= \frac{1}{2} \\
 Q(b) &=\frac{1}{3\pi}\sum_{k=1}^{2}(3-k)q(k)B(k)=\frac{1}{4}\Big(2q(1)+ q(2) \Big)= \frac{1}{2}
 \end{align*}
Hence $q(1)-q(2)=1$. Since $0\leq q(1),q(2)\leq 1$,  it implies that $q(1)=1$ and $q(2)=0$, i.e., $q$ is unique. However, $q$ is not strategy-proof.
\hfill $\blacksquare{}$

Imposing unanimity contracts the set of  reduced form implementable voting rules. In contrast to qualified anti-majority rules,  some u-qualified anti-majority rules can be OBIC.   The following result provides a necessary and sufficient condition on prior beliefs such that all unanimous voting rules are OBIC. 
\begin{prop}
\label{prop:icf}
Every unanimous and symmetric voting rule is  OBIC if and only if
\begin{align}
\label{eq:icf1}
\lambda(j) &\le \min\Big(\frac{\lambda(1)+ \lambda(n)}{C(n-1,j-1)}, \frac{\lambda(0)+ \lambda(n-1)}{C(n-1,j)} \Big)  \qquad~\forall~j \in \{1,\ldots,n-1\}
\end{align}
Further, if the prior is independent, every unanimous and symmetric voting rule is OBIC if and only if
 \begin{align}
 \label{eq:icf2}
 C(n-1,j-1) &\le  \Big[\Big(\frac{\pi}{1-\pi}\Big)^{n-j}+\Big(\frac{\pi}{1-\pi}\Big)^{1-j}\Big]~\qquad~\forall~j \in \{1,\ldots,n-1\}
\end{align}
\end{prop}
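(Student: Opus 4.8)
The plan is to rewrite the OBIC requirement as two linear inequalities in the voting rule $q$ --- one from the $a$-type and one from the $b$-type incentive constraint --- and then exploit that the unanimous symmetric voting rules form a box: $q(1),\dots,q(n-1)$ range freely over $[0,1]$ while $q(0)=0$ and $q(n)=1$ are fixed. By linearity, ``every unanimous rule is OBIC'' is equivalent to ``each of the two inequalities holds at every vertex of this box'', and the heart of the proof is identifying the binding vertices.

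First I would write out the incentive inequalities. By exchangeability, conditional on her own type being $a$ a given agent assigns probability $C(n-1,m)\lambda(m+1)/\pi$ to the event that exactly $m$ of the other $n-1$ agents have type $a$; since reporting $a$ makes the $a$-count $m+1$ and reporting $b$ makes it $m$, the constraint $Q(a\mid a)\ge Q(b\mid a)$ is equivalent to $\sum_{m=0}^{n-1} C(n-1,m)\lambda(m+1)[q(m+1)-q(m)]\ge 0$, and, conditioning instead on type $b$ (which replaces $\lambda(m+1)$ by $\lambda(m)$), the constraint $Q(a\mid b)\ge Q(b\mid b)$ is equivalent to $\sum_{m=0}^{n-1} C(n-1,m)\lambda(m)[q(m+1)-q(m)]\ge 0$. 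The second inequality is the image of the first under the $a\leftrightarrow b$ relabeling $\lambda(k)\mapsto\lambda(n-k)$, $q(k)\mapsto 1-q(n-k)$, which also carries $\mathcal{Q}_u^-$ onto itself, so it is enough to analyze the first inequality and transport the conclusion.

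Applying Abel summation to the first inequality and using $q(0)=0$, $q(n)=1$ turns it into
\begin{align*}
\lambda(n)\ +\ \sum_{\ell=1}^{n-1} q(\ell)\,\bigl[C(n-1,\ell-1)\lambda(\ell)-C(n-1,\ell)\lambda(\ell+1)\bigr]\ \ge\ 0 .
\end{align*}
For necessity: evaluating at the u-qualified anti-majority rule with quota $j$ (so $q(\ell)=1$ exactly on $\{1,\dots,j-1\}$) makes the inner sum telescope to $\lambda(1)-C(n-1,j-1)\lambda(j)$, and the constraint reads $C(n-1,j-1)\lambda(j)\le\lambda(1)+\lambda(n)$; the $b$-type version yields (after the analogous reindexing) $C(n-1,j)\lambda(j)\le\lambda(0)+\lambda(n-1)$; over $j\in\{1,\dots,n-1\}$ (the case $j=1$ being vacuous) these are exactly the inequalities in \eqref{eq:icf1}, and they must hold whenever every unanimous rule is OBIC, since u-qualified anti-majority rules are particular unanimous rules. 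For sufficiency: the minimum of the displayed left-hand side over $[0,1]^{n-1}$ is attained by putting $q(\ell)=1$ precisely where the bracketed coefficient is negative, and the main obstacle is to argue that this minimizing $0/1$ rule is always a u-qualified anti-majority rule. The natural route is to show the weight sequence $m\mapsto C(n-1,m)\lambda(m+1)$ is unimodal, so that $C(n-1,\ell-1)\lambda(\ell)-C(n-1,\ell)\lambda(\ell+1)$ changes sign only once and the set of $\ell$'s with negative coefficient is an initial segment; for independent priors this unimodality is just log-concavity of the shape $\pi\cdot\mathrm{Bin}(n-1,\pi)$, so there one may alternatively route sufficiency through Theorem \ref{theo:uext}, noting that under \eqref{eq:icf1} the extreme families $\mathcal{Q}_u^+$ (monotone, hence strategy-proof, hence OBIC) and $\mathcal{Q}_u^-$ are all OBIC and that OBIC --- equivalently $Q(a)\ge Q(b)$ --- is preserved under convex combinations of interim allocations.

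Finally, the independent-prior refinement is algebra: substituting $\lambda(k)=\pi^k(1-\pi)^{n-k}$ into \eqref{eq:icf1} and dividing by $\pi^j(1-\pi)^{n-j}$, the first family becomes $C(n-1,j-1)\le\bigl(\tfrac{1-\pi}{\pi}\bigr)^{j-1}+\bigl(\tfrac{\pi}{1-\pi}\bigr)^{n-j}$, which is \eqref{eq:icf2} after writing $\bigl(\tfrac{1-\pi}{\pi}\bigr)^{j-1}=\bigl(\tfrac{\pi}{1-\pi}\bigr)^{1-j}$; the second family becomes $C(n-1,j)\le\bigl(\tfrac{1-\pi}{\pi}\bigr)^{j}+\bigl(\tfrac{\pi}{1-\pi}\bigr)^{n-1-j}$, which after the index shift $j\mapsto j+1$ and discarding the now-vacuous endpoints is again \eqref{eq:icf2}. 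Hence \eqref{eq:icf2} is by itself necessary and sufficient under independence.
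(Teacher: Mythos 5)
Your necessity argument and your treatment of the independent-prior case are correct and, modulo packaging, line up with the paper: the paper also reduces necessity to the u-qualified anti-majority family, computing $\bar{Q}^j(a|a)-\bar{Q}^j(b|a)$ and $\bar{Q}^j(a|b)-\bar{Q}^j(b|b)$ directly (which is exactly what your Abel-summed functionals give after telescoping), and the algebra turning \eqref{eq:icf1} into \eqref{eq:icf2} is the same. Where you diverge is sufficiency: the paper never minimizes over the box of unanimous rules; it invokes Theorem \ref{theo:uext} to pass to the extreme families $\mathcal{Q}_u^+\cup\mathcal{Q}_u^-$ and asserts that convex combinations (at the reduced-form level) preserve OBIC, so that checking the $\mathcal{Q}_u^-$ family suffices.

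The genuine gap in your proposal is the sufficiency direction for general exchangeable (correlated) priors. Your ``natural route'' needs the weight sequence $m\mapsto C(n-1,m)\lambda(m+1)$ to be unimodal, but nothing forces this for an arbitrary exchangeable $\lambda$; and your fallback through Theorem \ref{theo:uext} is justified by you only under independence, since only there is OBIC equivalent to the reduced-form condition $Q(a)\ge Q(b)$ (under correlation the misreport probabilities $Q(b|a),Q(a|b)$ are not functions of $(Q(a),Q(b))$). Moreover, this is not a patchable technicality in your vertex-by-vertex framework: take $n=5$ with $\lambda(0)=5/24$, $\lambda(1)=\lambda(3)=0$, $\lambda(2)=\lambda(4)=1/24$, $\lambda(5)=1/6$ (a valid exchangeable prior, $\sum_k C(5,k)\lambda(k)=1$). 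Then \eqref{eq:icf1} holds (weakly; a small perturbation of $\lambda(5)$ makes it strict), yet the unanimous symmetric rule with $q(0)=0,\ q(1)=1,\ q(2)=0,\ q(3)=1,\ q(4)=0,\ q(5)=1$ gives $\pi\,[Q(a|a)-Q(b|a)]=-4\lambda(2)-4\lambda(4)+\lambda(5)=-1/6<0$, so it is not OBIC. Hence a rule-by-rule (ex-post) sufficiency argument of the kind you outline cannot work for correlated priors; the sufficiency only goes through at the reduced-form level, which is precisely the step the paper takes when it treats OBIC as preserved under the reduced-form decomposition of Theorem \ref{theo:uext} -- a step that itself leans on OBIC being pinned down by $(Q(a),Q(b))$, automatic under independence but delicate under correlation (your unease at the minimizing vertex is pointing at exactly this). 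To complete your write-up you should either restrict the direct box argument to independent priors, where it is complete and self-contained, or state and use explicitly the reduced-form reading of the sufficiency claim as the paper does.
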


Using Corollary \ref{cor:equiv}, we can argue that when (\ref{eq:icf2}) holds and the prior is independent, every unanimous voting rule is reduced form equivalent to a strategy-proof voting rule.
An immediate corollary of the above result is that when there is a small number of agents, every unanimous voting rule is OBIC
if the prior is independent.

\begin{cor}If the prior is independent and $n=3$, every unanimous and symmetric voting rule is OBIC.
\end{cor}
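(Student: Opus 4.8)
The plan is to derive this corollary as an immediate consequence of Proposition~\ref{prop:icf}. Since the prior is assumed independent, it suffices to verify that condition~\eqref{eq:icf2} holds for every $j \in \{1,\ldots,n-1\}$; once that is done, Proposition~\ref{prop:icf} delivers the conclusion. For $n=3$ this reduces to checking the inequality only for $j=1$ and $j=2$, so the whole argument is a two-line verification.

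For $j=1$, the left-hand side of~\eqref{eq:icf2} is $C(2,0)=1$, while the right-hand side is $\left(\frac{\pi}{1-\pi}\right)^{2} + \left(\frac{\pi}{1-\pi}\right)^{0} = \left(\frac{\pi}{1-\pi}\right)^{2} + 1 \ge 1$, so the inequality holds for every $\pi \in (0,1)$. For $j=2$, set $x := \frac{\pi}{1-\pi} > 0$; then the left-hand side is $C(2,1)=2$ and the right-hand side is $x + \frac{1}{x}$, which is at least $2$ by the AM--GM inequality. Hence~\eqref{eq:icf2} holds for all $\pi \in (0,1)$ when $n=3$, and Proposition~\ref{prop:icf} then implies that every unanimous and symmetric voting rule is OBIC. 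The only substantive ingredient is the elementary bound $x + 1/x \ge 2$, so no real obstacle arises; the point of the corollary is simply to record that the binding constraint in~\eqref{eq:icf2} (the case $j=2$) is slack for all interior $\pi$ precisely because $n$ is as small as $3$.
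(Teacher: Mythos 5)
Your proof is correct and follows essentially the same route as the paper: it verifies the hypothesis of Proposition~\ref{prop:icf} for $n=3$ at $j=1$ and $j=2$, with your AM--GM bound $x+1/x\ge 2$ being exactly the paper's inequality $2\pi(1-\pi)\le \pi^2+(1-\pi)^2$ after clearing denominators (you check the form \eqref{eq:icf2}, the paper the equivalent form $jB(j)\le n\pi(\pi^{n-1}+(1-\pi)^{n-1})$). No gaps.
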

\begin{proof}
Since $\pi \in (0,1)$, $j^* = \lfloor 3 \pi \rfloor \le 2$. If $j^*=1$, we get
\begin{align*}
B(1) = 3 \pi (1-\pi)^2 &\le 3 \pi (\pi^2 + (1-\pi)^2)
\end{align*}
If $j^*=2$, we get
\begin{align*}
B(2) = 3 \pi^2 (1-\pi) = \frac{3\pi}{2} (2\pi (1-\pi)) \le \frac{3\pi}{2} (\pi^2 + (1-\pi)^2)
\end{align*}
Hence, by Proposition \ref{prop:icf}, every unanimous voting rule is OBIC.
\end{proof}

To illustrate Proposition \ref{prop:icf}, suppose $n=4$.  The condition (\ref{eq:icf1}) is given by
\begin{align*}
 &3\lambda(2) \le \lambda(1)+\lambda(4)\\
 & 3\lambda(3) \le \lambda(1)+\lambda(4)\\
 & 3\lambda(1) \le \lambda(0)+\lambda(3)\\
 & 3\lambda(2) \le \lambda(0)+\lambda(3)
\end{align*}
 Notice that for independent uniform priors, $\lambda(k)=(\frac{1}{2})^4$, the belief conditions  fail. For sufficiently positively correlated beliefs where $\lambda(0)$ and $\lambda(4)$ are large, the belief conditions  hold. This is in general true. If $\lambda(0)$ and $\lambda(n)$ are sufficiently large,
 (\ref{eq:icf1}) holds. Similarly, if $\lambda(0)$ and $\lambda(1)$ (or, $\lambda(n-1)$ and $\lambda(n)$) are sufficiently large, (\ref{eq:icf1}) holds.

\section{Large Economies}
\label{sec:large}

In this section, we apply our results to large economies. For this, we
assume independent and identically distributed types. So, $\pi$ denotes
the probability that an agent is $a$-type.
Let $\mu:=n\pi$ denote the mean of the binomial distribution.

There are two ways in which we increase the value of $n$. First, we
fix the value of $\pi$ and increase $n$. This implies that the expected
number of $a$-types ($\mu$) also increases. Second, we fix the expected
number of $a$-types at $\mu$, and increase $n$. This implies that
the value of $\pi$ decreases with increasing $n$. We show the implication
of large $n$ on the set of reduced form implementable voting rules in both
the cases.

Since $n$ is variable in this section, for an arbitrary voting rule, we denote the interim allocation
probabilities as $(Q(a;n), Q(b;n))$ in this section. For a fixed $\pi$ and $n$,
the interim allocation probabilities corresponding to qualified majority and anti-qualified
majority voting rules will be useful for our analysis. In particular,
pick a qualified majority voting rule with quota $j > 0$.\footnote{Qualified majority
with quota $j=0$ corresponds to the constant voting rule where $a$ is chosen at
every type profile.}
For such a qualified majority, the interim allocation probabilities satisfy
\begin{align}
Q^j(a;n) - Q^j(b;n) &= \frac{1}{n\pi} \sum_{k=j}^n k B(k) - \frac{1}{n(1-\pi)} \sum_{k=j}^n (n-k) B(k) \nonumber \\
&= \frac{1}{n\pi} \sum_{k=j}^n k C(n,k) \pi^k (1-\pi)^{(n-k)} - \frac{1}{n(1-\pi)} \sum_{k=j}^n (n-k) C(n,k) \pi^k (1-\pi)^{(n-k)} \nonumber \\
&=  \sum_{k=j}^n C(n-1,k-1) \pi^{k-1} (1-\pi)^{(n-k)} - \sum_{k=j}^n C(n-1,k) \pi^k (1-\pi)^{(n-k-1)} \nonumber \\
&= C(n-1,j-1) \pi^{j-1} (1-\pi)^{(n-j)} \label{eq:diffq}
\end{align}
Similarly, for a qualified anti-majority with quota $j > 0$, the interim allocation probabilities satisfy
\begin{align}
\overline{Q}^j(b;n) - \overline{Q}^j(a;n) &= C(n-1,j-1) \pi^{j-1} (1-\pi)^{(n-j)} \label{eq:diffaq}
\end{align}
This can also be seen from the fact that for a fixed quota $j$, the qualified majority and
the qualified anti-majority interim allocation probabilities are related as: $\overline{Q}^j(a;n) = 1- Q^j(a;n)$ and $\overline{Q}^j(b;n) = 1- Q^j(b;n)$.

Depending on whether we increase $n$ for a fixed $\pi$ or fixed $\mu$, the RHS of (\ref{eq:diffq}) (and (\ref{eq:diffaq}))
behaves differently. In the former case, it is approximately equal
to a normal distribution with vanishing values of density. In the latter case, it is related to
the Poisson distribution. This leads to different convergence results in these cases.

\begin{prop}\label{prop:large1}
Suppose $\pi$ is fixed and $\pi \in (0,1)$. Then, for every $\epsilon > 0$, there exists $n_0$ such that for every $n$-agent economy  with $n > n_0$,  if interim allocation $(Q(a,n),Q(b,n))$ is reduced form implementable, then
\begin{align*}
|Q(a;n) - Q(b;n)|< \epsilon.
\end{align*}
\end{prop}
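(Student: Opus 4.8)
The plan is to exploit the characterization of reduced form implementable $Q$ in terms of the extreme points given in Theorem \ref{theo:ext}: every such $Q$ is a convex combination of the interim allocation probabilities of qualified majority and qualified anti-majority voting rules. Since $|Q(a;n)-Q(b;n)|$ is the absolute value of an affine function of $(Q(a;n),Q(b;n))$, and the set of reduced form implementable $Q$ is a polytope whose vertices are precisely the points $(Q^j(a;n),Q^j(b;n))$ and $(\overline{Q}^j(a;n),\overline{Q}^j(b;n))$ for $j \in \{0,\ldots,n\}$, the maximum of $|Q(a;n)-Q(b;n)|$ over the polytope is attained at one of these vertices. Hence it suffices to bound $|Q^j(a;n)-Q^j(b;n)|$ and $|\overline{Q}^j(a;n)-\overline{Q}^j(b;n)|$ uniformly over all $j$.

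The second step is to invoke formulas (\ref{eq:diffq}) and (\ref{eq:diffaq}) from the excerpt, which give
$$Q^j(a;n)-Q^j(b;n) = \overline{Q}^j(b;n)-\overline{Q}^j(a;n) = C(n-1,j-1)\,\pi^{j-1}(1-\pi)^{n-j}$$
for $j > 0$, and equals zero-type boundary cases for $j=0$ (the constant rules, where the difference is $0$). So the problem reduces to showing that $\max_{1 \le j \le n} C(n-1,j-1)\pi^{j-1}(1-\pi)^{n-j} \to 0$ as $n \to \infty$ with $\pi$ fixed. Writing $m = n-1$ and $i = j-1 \in \{0,\ldots,m\}$, this is $\max_{0 \le i \le m} C(m,i)\pi^i(1-\pi)^{m-i}$, i.e. the maximum probability mass of a $\mathrm{Binomial}(m,\pi)$ distribution, which is the mass at the mode $i \approx m\pi$.

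The third step is to show this maximal binomial mass vanishes. The cleanest route is a local central limit or Stirling estimate: the binomial pmf at its mode is $\Theta(1/\sqrt{m\pi(1-\pi)})$, which tends to $0$ since $\pi(1-\pi) > 0$ is fixed. Concretely, apply Stirling's formula to $C(m,\lfloor m\pi\rfloor)\pi^{\lfloor m\pi\rfloor}(1-\pi)^{m-\lfloor m\pi\rfloor}$ and observe the leading term is $O(m^{-1/2})$; alternatively one can cite a standard bound such as $\max_i C(m,i)\pi^i(1-\pi)^{m-i} \le \frac{1}{\sqrt{2\pi m \pi(1-\pi)}}(1+o(1))$, or even the crude bound that any single atom of a distribution with variance $m\pi(1-\pi)$ and integer support is at most $O(1/\sqrt{m\pi(1-\pi)})$ by a Chebyshev/anti-concentration argument. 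Given $\epsilon > 0$, choose $n_0$ so that this bound is below $\epsilon$ for all $n > n_0$; then for any reduced form implementable $Q$ in an $n$-agent economy, $|Q(a;n)-Q(b;n)| \le \max_j C(n-1,j-1)\pi^{j-1}(1-\pi)^{n-j} < \epsilon$.

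The main obstacle is purely the quantitative estimate in the third step — one must be a little careful that the bound on the modal binomial mass is genuinely uniform in $j$ (it is, since we are taking the maximum over all $j$ and bounding it by the mode) and that the $j=0$ and $j=n$ boundary cases are handled (for $j=n$ the difference is $\pi^{n-1}\to 0$; for $j=0$ the rule is constant and the difference is $0$). Everything else — reducing to vertices, substituting the closed-form expressions — is immediate from Theorem \ref{theo:ext} and the identities already derived in the excerpt.
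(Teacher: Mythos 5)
Your proposal is correct and follows essentially the same route as the paper's proof: reduce to the extreme points via Theorem \ref{theo:ext}, plug in the closed-form differences (\ref{eq:diffq}) and (\ref{eq:diffaq}), and show that the maximal probability mass of a binomial distribution with fixed success probability $\pi$ vanishes as $n \to \infty$. The only minor difference is that you justify the vanishing modal mass by Stirling/anti-concentration rather than the paper's de Moivre--Laplace normal approximation, and you make the vertex-maximum (uniformity in $j$) step explicit, which is a small gain in rigor but not a different argument.
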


Proposition \ref{prop:large1} says that in large economies, the only reduced form implementable probabilities are those where $Q(a;n)=Q(b;n)$.\footnote{For correlated priors, it is well known that  the central limit theorem does not hold in general. However, we conjecture that Proposition  \ref{prop:large1} continues to hold for the case of infinite exchangeable priors, where we say an infinite sequence $ X_1, X_2, X_3,\dots$ of  random variables is exchangeable if for any finite $n$,  the joint probability distribution of  $(X_1,X_2,\dots,X_n)$ is the same as that of $(X_{\sigma(1)},X_{\sigma(2)},\dots,X_{\sigma(n)}) $ for any permutation $ \sigma$.}
 If the number of agents is large, the interim allocation probabilities (for any voting rule) is less sensitive
to the type of the agent. Hence, both $a$-types and $b$-types get the same interim allocation probabilities with large $n$.

However, this is not the case if the economies become
large with a fixed $\mu$. If $\mu$ is fixed, increasing $n$ decreases $\pi$. So, the probability of $a$-types decreases, i.e., $b$-types dominate the economy. As a result, depending
on how sensitive a voting rule is to the number of $b$-types (or $a$-types), we may get quite different interim allocation 
probabilities $Q(a;n)$ and $Q(b;n)$. For instance, consider the simple rule that chooses $b$ when all agents
have $b$-type and chooses $a$ otherwise. Then, if an agent has $a$-type, the rule must choose: $Q(a;n)=1$.
But if an agent has $b$-type, the rule chooses $b$ if all other $(n-1)$ agents have 
$b$ type. For a fixed $\mu$, the probability that a given agent has $b$ type is $1-(\mu/n)$.
So, probability that $(n-1)$ agents have $b$ type is $\big(1-(\mu/n)\big)^{n-1}$, which converges
to $e^{-\mu}$ for large $n$. So, for large $n$, we have $Q(b;n)=1-e^{-\mu}$, and $Q(a;n)-Q(b;n)=e^{-\mu} > 0$. The proposition below
uses a slightly more sophisticated voting rule to come up with an improved bound on $Q(a;n)-Q(b;n)$.

\begin{prop}\label{prop:large2}
Suppose $\mu$ is fixed. Then, there is a positive constant $M(\mu)$ such that for every $\epsilon > 0$, there exists $n_0$ such that for every $n$-agent economy with $n > n_0$,
\begin{enumerate}

\item interim allocation probabilities $(Q(a,n),Q(b,n))$ exists which is reduced form implementable and
\begin{align*}
Q(a;n) - Q(b;n) > M(\mu) - \epsilon
\end{align*}

\item interim allocation probabilities $(\widehat{Q}(a;n),\widehat{Q}(b;n))$ exists which is reduced form implementable and
\begin{align*}
\widehat{Q}(b;n) - \widehat{Q}(a;n) > M(\mu)  - \epsilon
\end{align*}

\end{enumerate}
\end{prop}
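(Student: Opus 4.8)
The plan is to show that the extremal gap between $Q(a;n)$ and $Q(b;n)$ over reduced form implementable allocations is, in the limit $n\to\infty$ with $\mu$ fixed, governed by the mode of a Poisson$(\mu)$ distribution. Accordingly, I would set
\[
M(\mu) := \max_{m \in \{0,1,2,\ldots\}} \frac{\mu^m e^{-\mu}}{m!},
\]
which is attained at some finite index $m^* := \lfloor \mu \rfloor$ (also at $\mu-1$ when $\mu$ is an integer), and is strictly positive since the Poisson probabilities are positive. The whole point is that for $\mu > 1$ this constant is strictly larger than the bound $e^{-\mu}$ obtained from the naive rule discussed before the proposition, because $e^{-\mu}$ is just the $m=0$ term.

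The first step is to recall the closed forms already derived. For a qualified majority with quota $j \ge 1$, equation (\ref{eq:diffq}) gives $Q^j(a;n) - Q^j(b;n) = C(n-1,j-1)\,\pi^{j-1}(1-\pi)^{n-j}$, and for a qualified anti-majority with quota $j \ge 1$, equation (\ref{eq:diffaq}) gives $\overline{Q}^j(b;n) - \overline{Q}^j(a;n) = C(n-1,j-1)\,\pi^{j-1}(1-\pi)^{n-j}$. Substituting $\pi = \mu/n$ and holding $j$ \emph{fixed} while $n \to \infty$, I would write $C(n-1,j-1)\pi^{j-1} = \frac{1}{(j-1)!}\prod_{i=1}^{j-1}(\mu - i\pi)$, whose limit is $\mu^{j-1}/(j-1)!$, while $(1-\mu/n)^{n-j} \to e^{-\mu}$; hence the common right-hand side converges to $\frac{\mu^{j-1}}{(j-1)!}e^{-\mu}$, i.e. to the Poisson$(\mu)$ probability at $j-1$.

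For part (1), take $j^* := m^* + 1$, which lies in $\{1,\ldots,n\}$ once $n > m^*$; the qualified majority with quota $j^*$ is a genuine voting rule, so its interim allocation is reduced form implementable, and by the computation above $Q^{j^*}(a;n) - Q^{j^*}(b;n) \to \frac{\mu^{m^*}}{m^*!}e^{-\mu} = M(\mu)$. Thus there is $n_0$ with $Q^{j^*}(a;n) - Q^{j^*}(b;n) > M(\mu) - \epsilon$ for all $n > n_0$, which is exactly (1). For part (2), the identical argument applied to the qualified anti-majority with quota $j^*$ gives $\widehat{Q}(b;n) - \widehat{Q}(a;n) = \overline{Q}^{j^*}(b;n) - \overline{Q}^{j^*}(a;n) \to M(\mu)$, yielding (2) after enlarging $n_0$ if necessary.

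The argument has no serious obstacle, only one point of care: the quota $j^*$ must be chosen as a function of $\mu$ alone and not of $n$, so that the fixed-$j$ limit applies — this is precisely why the ``more sophisticated'' rule is the qualified majority with quota $\lfloor\mu\rfloor+1$ rather than the quota-$1$ rule of the preceding paragraph, which only delivers $e^{-\mu}\le M(\mu)$. (If one additionally wanted the tight statement that $M(\mu)$ is also an asymptotic \emph{upper} bound on $|Q(a;n)-Q(b;n)|$, one would invoke Theorem \ref{theo:ext} to reduce to extreme points and then control $\max_{1\le j\le n} C(n-1,j-1)\pi^{j-1}(1-\pi)^{n-j}$ uniformly in $j$ via a local-limit estimate for the binomial; but this is not needed for the stated proposition.)
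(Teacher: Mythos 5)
Your proposal is correct and follows essentially the same route as the paper: fix the quota at one more than the Poisson mode (the paper's $k_\mu$, your $m^*=\lfloor\mu\rfloor$), use the closed forms (\ref{eq:diffq}) and (\ref{eq:diffaq}) with $\pi=\mu/n$, and pass to the Poisson limit to get the gap converging to $M(\mu)=\frac{\mu^{k_\mu}}{k_\mu!}e^{-\mu}$ for both the qualified majority and anti-majority rules. The only cosmetic difference is that you compute the limit of $C(n-1,j-1)\pi^{j-1}(1-\pi)^{n-j}$ directly, while the paper invokes the Poisson limit theorem for $C(n,j)\pi_n^j(1-\pi_n)^{n-j}$; these are the same calculation.
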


Combining Propositions \eqref{prop:large1}, \eqref{prop:large2} and Corollary \eqref{cor:equiv}, we conclude that  every reduced form implementable rule  is strategy-proof in the large for the fixed $\pi$, but this is not the case if $\mu$ is fixed.

\section{Relation to the literature}
\label{sec:lit}

The Border's theorem for single object allocation problem was formulated in \citet{M84,MR84}. The reduced form characterization
for this problem were developed in \citet{B91}. The symmetric version of Border's theorem with an elegant proof
using Farkas Lemma is developed in \citet{B07}. There are other approaches to proving Border's theorem (which also
makes it applicable in some constrained environment): network flow approach in \citet{CKM13}, geometric approach in
\citet{GK22}. \citet{HR15} provide an equivalence characterization of Border's theorem using second order
stochastic dominance. \citet{KMS21} further develop the majorization approach and apply it to a variety of problems
in economics.
 Border's theorem applies to private values single object auction, but
\citet{GK16} extend Border's theorem to allow for value interdependencies. \cite{Z21}  generalizes reduced-form characterizations to allocation of multiple objects with paramodular constraints.   \cite{LY21}  study a universal   implementation for allocation of multiple objects. \citet{Y21} considers the consequences of incorporating fairness constraints in the reduced form problem.  \citet{L22} considers a public good
allocation problem but with only two agents (but multiple alternatives). He provides an extension of Border's theorem to his two-agent
problem. Our ordinal voting model
over two alternatives is a public good model with a specific type space, which is not covered in these papers.

\cite{V11} studies the combinatorial structure of  reduced-form auctions by the polymatroid theory; see also \citet{CKM13},   \citet{AFHHM19}  and \citet{Z21}. Our characterization condition shares some similarity with a polymatroid as it requires only integer valued coefficients in linear inequalities. At the same time, it differs from a polymatroid in that the inequalities contain not only 0,1 coefficients but more general integer coefficients.

The two alternatives voting model has received attention in the literature in social choice theory -- from May's theorem \citep{M52} to
its extensions, including a recent extension by \citet{BHJTY21}. \citet{ST12} identify qualified majority rules as ex-ante welfare maximizing in the class of dominant strategy voting rules. The results in \citet{AK14} (which we discussed earlier) show that focusing attention to ordinal rules in this model is
without loss of generality in a certain sense -- see \citet{N04} also.

\newpage

\appendix

\section{Missing proofs}
\label{sec:app1}

We first prove Theorem \ref{theo:ext}, and then Theorem \ref{theo:main}.

\subsection{Proof of Theorem \ref{theo:ext}}

\begin{proof}
Reduced form probabilities $(Q(a),Q(b))$ is implementable if
\begin{align}
\frac{1}{n\pi}\sum_{k=0}^n k q(k)B(k) &= Q(a) \label{eq:ee1} \\
\frac{1}{n(1-\pi)}\sum_{k=0}^n (n-k)q(k)B(k) &= Q(b)  \label{eq:ee2} \\
0 \le q(k) &\le 1 \qquad~\forall~k \in \{0,1,\ldots,n\} \label{eq:ee3}
\end{align}
Let $\mathcal{P}$ be the projection of this polytope onto the $(Q(a),Q(b))$-space. Clearly, $\mathcal{P}$ is a polytope.
Consider the following linear program
\begin{align}
\max_{Q} \mu_a Q(a) + \mu_b Q(b) & \tag{\textbf{LP}-Q} \label{tag:q}\\
\textrm{subject to}~~~(Q(a),Q(b))~\in \mathcal{P} \nonumber
\end{align}
As we vary $\mu_a$ and $\mu_b$, the solutions to the linear program program (\ref{tag:q}) characterize
the {\sl boundary} points of $\mathcal{P}$.
Since each point in $\mathcal{P}$ is equivalent to
finding a voting rule $q$ that satisfies (\ref{eq:ee1}), (\ref{eq:ee2}), and (\ref{eq:ee3}),
we can rewrite the linear program (\ref{tag:q}) in the space of $q$ as:
\begin{align}
\max_{q} \Big[\frac{\mu_a}{n\pi} \sum_{k=0}^n k q(k)B(k) &+ \frac{\mu_b}{n(1-\pi)} \sum_{k=0}^n (n-k) q(k)B(k) \Big]
\tag{\textbf{LP}-q} \label{tag:q2} \\
\textrm{subject to}~~~0 \le q(k) &\le 1 \qquad~\forall~k \in \{0,1,\ldots,n\} \nonumber
\end{align}
Hence, the set of boundary points of $\mathcal{P}$ can be described
by the interim allocation probabilities of the voting rules
obtained as a solution to the linear program (\ref{tag:q2}) as we
vary $\mu_a$ and $\mu_b$. 

We now do the proof in two steps. \\

\noindent {\sc Step 1.} We first show that  every extreme point of $\mathcal{P}$ is implemented by either a qualified majority voting rule or a qualified anti-majority voting rule, i.e., every element of $\mathcal{P}$ can be written as a convex combination
of qualified (anti-)majority voting rules.
  
It is sufficient to show that for every $\mu_a$ and $\mu_b$, there is a solution to (\ref{tag:q}) that is implemented by either a qualified majority or a qualified anti-majority voting
rule.  To show this, we show that for every $\mu_a$ and $\mu_b$, some qualified (anti-)majority
voting rule is a solution to (\ref{tag:q2}).

By denoting $\hat{\mu}_a:= \mu_a/(n\pi)$ and $\hat{\mu}_b:= \mu_b/(n(1-\pi))$,
we see that the objective function of (\ref{tag:q2}) is
\begin{align*}
\sum_{k=0}^n \big[ n\hat{\mu}_b + k (\hat{\mu}_a - \hat{\mu}_b) \big]q(k)B(k)
\end{align*}
We show that  $n\hat{\mu}_b + k (\hat{\mu}_a - \hat{\mu}_b)  $ is either weakly increasing, in which case some qualified majority voting
rule is optimal) or weakly decreasing, in which case some qualified anti-majority voting rule
is optimal.  
 
If $n\hat{\mu}_b + k (\hat{\mu}_a - \hat{\mu}_b) > 0$ for all $k$, then a solution
to (\ref{tag:q2}) is to set $q(k)=1$
for all $k$. This is the qualified majority with quota $0$. If $n\hat{\mu}_b + k (\hat{\mu}_a - \hat{\mu}_b) < 0$
for all $k$, then a solution to (\ref{tag:q2}) is to set $q(k)=0$ for all $k$. This is the qualified anti-majority with quota $0$.
If $n\hat{\mu}_b + k (\hat{\mu}_a - \hat{\mu}_b) = 0$ for all $k$, then every voting rule $q$ is a solution.

If the sign of $n\hat{\mu}_b + k (\hat{\mu}_a - \hat{\mu}_b)$ changes with $k$, then we consider two
cases. If $\hat{\mu}_a > \hat{\mu}_b$, then there is a cut-off $k^*$ such that
$n\hat{\mu}_b + k (\hat{\mu}_a - \hat{\mu}_b) > 0$ for all $k \ge k^*$ and
$n\hat{\mu}_b + k (\hat{\mu}_a - \hat{\mu}_b) < 0$ for all $k < k^*$. Then, the
qualified majority with quota $k^*$ is a solution to (\ref{tag:q2}).
On the other hand if $\hat{\mu}_a < \hat{\mu}_b$, then there is a cutoff $k^*$
such that $n\hat{\mu}_b + k (\hat{\mu}_a - \hat{\mu}_b) > 0$ for all $k \le k^*$ and
$n\hat{\mu}_b + k (\hat{\mu}_a - \hat{\mu}_b) < 0$ for all $k > k^*$. Then, the qualified
anti-majority with quota $k^*$ is a solution of
(\ref{tag:q2}).\footnote{When $\hat{\mu}_a=\hat{\mu}_b$
the sign of $n\hat{\mu}_b + k (\hat{\mu}_a - \hat{\mu}_b)$ does not change with $k$.} Note that in both cases above, if $n\hat{\mu}_b + k (\hat{\mu}_a - \hat{\mu}_b) = 0$
for $k=k^*$,  the (anti-)qualified majority with quota $k^*$ is a solution to (\ref{tag:q2}). \\

\noindent {\sc Step 2.} We now show that every qualified (anti-)majority voting rule  implements a {\it distinct} extreme point of $\mathcal{P}$. Every extreme point in $\mathcal{P}$ is obtained by considering values of $\mu_a$ and $\mu_b$
which generate a {\sl unique} optimal solution to the linear program (\ref{tag:q}).
 It is sufficient to show that  every  qualified (anti-)majority voting rule is unique optimal solution
to (\ref{tag:q2}) for some $\mu_a$ and $\mu_b$. This is easily seen from our analysis above that for almost all $\mu_a$ and $\mu_b$, in case an optimal solution to (\ref{tag:q2}) exists, it is unique, corresponds to a qualified
majority or a qualified anti-majority voting rule.  \\

\noindent Combining Steps 1 and 2, we see that the set of extreme points of $\mathcal{P}$
 is the set of qualified majority voting rules and the set of qualified anti-majority voting rules.
 \end{proof}

\subsection{Proof of Theorem \ref{theo:main}}

\begin{proof}
We know that the necessary conditions for reduced form implementation are (\ref{eq:e1}) and (\ref{eq:e2}).
Let $\mathcal{P}^*$ denote the polytope described by (\ref{eq:e1}) and (\ref{eq:e2}).
We show that the extreme points of $\mathcal{P}^*$ correspond to
the qualified majority and the qualified anti-majority voting rules. From Theorem \ref{theo:ext},
we know that the extreme points of $\mathcal{P}$ also correspond to
the qualified majority and the qualified anti-majority voting rules. Hence, $\mathcal{P}=\mathcal{P}^*$.

To show that the extreme points of $\mathcal{P}^*$ correspond to
the qualified majority and the qualified anti-majority voting rules,
we follow two steps.

\noindent {\sc Every $q \in \mathcal{Q}^+ \cup \mathcal{Q}^-$ is an extreme point.}  
Consider any qualified majority voting rule with quota $j\in\{1,\ldots, n\}$. Using 
\begin{align*}
n\pi Q^j(a)  =  \sum_{k=j}^n k B(k)~~~\textrm{and}~~n(1-\pi)  Q^j(b)  =    \sum_{k=j}^{n} (n-k) B(k),
\end{align*} 
it is easy to verify that  $Q^j$ satisfies all  inequalities in  $(\ref{eq:e1})$ and $(\ref{eq:e2})$ and inequality
(\ref{eq:e1}) is binding for $j$ and $(j-1)$ at $Q^j$.  Since $Q^j\in\mathcal{P}^*$ and  $Q^j$ is the intersection of two linearly independent hyperplanes, it gives an extreme point of $\mathcal{P}^*$.  Since the qualified majority voting rule with quota $0$ corresponds to a
constant voting rule, that is also an extreme point.

An analogous argument shows that  the interim allocation probability of every qualified anti-majority voting rule with a quota $j \in \{0,\ldots,n\}$ is an extreme point. 

\noindent {\sc No extreme point outside $\mathcal{Q}^+ \cup \mathcal{Q}^-$.} Consider an extreme point of $\mathcal{P}^*$ that is not a  qualified  (anti-)majority rule. Then two non-adjacent constraints must be binding,  i.e., either (\ref{eq:e1}) binds for some
$j$ and $j+\ell$ with $\ell > 1$, or (\ref{eq:e2}) binds for some $j$ and $j+\ell$ with $\ell > 1$, or (\ref{eq:e1}) binds
for some $j$ and (\ref{eq:e2}) binds for some $\ell$. 

Assume first that (\ref{eq:e1}) binds for $j$ and $j+\ell$,
where $\ell > 1$. The equality corresponding to $(j+\ell)$ is
\begin{align*}
0 &= (j+\ell) (1-\pi)Q(b) - (n-j-\ell)\pi Q(a) + \sum_{k=j+\ell+1}^n (k-j-\ell) B(k) \\
&= \ell \Big( \pi Q(a) + (1-\pi) Q(b) \Big) + j (1-\pi)Q(b) - (n-j) \pi Q(a) \\
&+ \sum_{k=j+\ell+1}^n (k-j)B(k) - \sum_{k=j+\ell+1}^n\ell B(k)
\end{align*}
Since inequality  (\ref{eq:e1}) binds for $j$,  substitute the equality  into   (\ref{eq:e1}) for  $j+1$,
\begin{align*}
\pi Q(a) + (1-\pi)Q(b)  \ge \sum_{k=j+1}^n B(k)  
\end{align*}
We get
\begin{align*}
0 &\ge \sum_{k=j+1}^n \ell B(k)  - \sum_{k=j+\ell+1}^n \ell B(k) +  \sum_{k=j+\ell+1}^n (k-j)B(k) - \sum_{k=j+1}^n (k-j)B(k) \\
&= \sum_{k=j+1}^{j+\ell} \ell B(k) - \sum_{k=j+1}^{j+\ell} (k-j)B(k) = \sum_{k=j+1}^{j+\ell} (j+\ell - k) B(k) > 0
\end{align*}
which is a contradiction. Hence, (\ref{eq:e1}) cannot bind for $j$ and $(j+\ell)$ for $\ell > 1$.
An analogous proof shows that (\ref{eq:e2}) cannot bind for $j$ and $(j+\ell)$ for $\ell > 1$.

Now, assume   (\ref{eq:e1}) binds for $j$ and (\ref{eq:e2}) binds
for $\ell$. Hence, adding those two equalities, we get
\begin{align*}
0 &= (j-\ell) (1-\pi)Q(b) + (j-\ell) \pi Q(a) + \sum_{k=0}^{\ell-1} (\ell-k)B(k) + \sum_{k=j+1}^n (k-j)B(k)
\end{align*}
If $j \ge \ell$ and $(j,\ell)\neq (n,0)$, the RHS is positive, giving us a contradiction. If $j < \ell$ and $(j,\ell)\neq (0,n)$, using  $\pi Q(a) + (1-\pi) Q(b)\leq 1$, we get
\begin{align*}
0 &= (j-\ell) \Big((1-\pi)Q(b) + \pi Q(a)\Big) + \sum_{k=0}^{\ell-1} (\ell-k)B(k) + \sum_{k=j+1}^n (k-j)B(k) \\
&\ge j-\ell+ \sum_{k=0}^{\ell-1} (\ell-k)B(k) + \sum_{k=j+1}^n (k-j)B(k) \\
&= j\Big(1-\sum_{k=j+1}^n B(k)\Big)- \ell \Big(1-\sum_{k=0}^{\ell-1}B(k)\Big)+\Big(\sum_{k=\ell }^n k B(k)-n\pi\Big)+   \Big(n\pi-\sum_{k=0}^{j } kB(k)\Big)\\
&=\sum_{k=0}^j  (j-k)  B(k)  +\sum_{k=\ell }^n (k- \ell) B(k)  >0
\end{align*}
which also gives us a contradiction.

 If $(j,\ell)=(n,0)$ or $ (0,n)$, the two equalities determine $(Q(a),Q(b))=(0,0)$ or $ (1,1)$, which correspond to the two constant voting rules, which are in $\mathcal{Q}^+ \cap \mathcal{Q}^-$.
\end{proof}

\subsection{Proof of Theorem \ref{theo:monimp}}

\begin{proof}
\noindent $(1) \Rightarrow (2)$. Since $Q$ is reduced form monotone implementable,
it is reduced form implementable by a monotone voting rule $q$. Hence, we can
write
\begin{align*}
n\pi (1-\pi) \Big[ Q(a) - Q(b) \Big] &= \sum_{k=0}^n \big[ k (1-\pi) - (n-k)\pi\big]q(k)B(k) =  \sum_{k=0}^n (k-n\pi) q(k)B(k) \\
&\ge q(\lfloor n\pi \rfloor) \sum_{k=0}^n (k-n\pi)B(k) = 0
\end{align*}
where we use monotonicity of $q$ for the inequality.
This shows $Q(a) \ge Q(b)$. \\

\noindent $(2) \Rightarrow (3)$. If $Q$ is reduced form implementable, by Theorem \ref{theo:ext},
it can be expressed as convex combination of interim allocation probabilities of
qualified majority and qualified anti-majority voting rules.

Consider any qualified anti-majority with quota $j\in \{0,\ldots,n\}$ (qualified anti-majority
with quota $0$ corresponds to a constant voting rule).
Define for each $j\in \{0,\ldots,n\}$
\begin{align*}
\delta(j):=\overline{Q}^j(a) - \overline{Q}^j(b) &= \frac{1}{n\pi}\sum_{k=0}^{j-1} k B(k) - \frac{1}{n(1-\pi)}\sum_{k=0}^{j-1} (n-k) B(k) = \frac{1}{n\pi(1-\pi)}\sum_{k=0}^{j-1} (k-n\pi)B(k)
\end{align*}
Note that $\delta(0)=0$ and $\delta(n) = -n(1-\pi)B(n) < 0$.

For all $j \in \{0,\ldots,n-1\}$, we get 
\begin{align*}
\delta (j+1) - \delta(j) &= \frac{1}{n\pi(1-\pi)} (j-n\pi)B(j)
\end{align*}
which is non-negative if $j \ge n\pi$ and negative if $j < n\pi$. Hence, value of $\delta(j)$ decreases with $j$
for all $j < n\pi$ and increases after that till $j=n$. Since $\delta(0)=0$ and $\delta(n) < 0$, we conclude
that $\delta(j) = \overline{Q}^j(a) - \overline{Q}^j(b) < 0$ for all $j \in \{1,\ldots,n\}$ and $\delta(0)=0$. 

On the other hand, for any qualified majority with quota $j$, we have $Q^j(a) \ge Q^j(b)$.
The qualified anti-majority with quota zero corresponds to
a constant voting rule which generates interim allocation probabilities $Q(a)=Q(b)=0$.
Hence, if $Q(a) \ge Q(b)$, then $Q$ is reduced form implementable by convex combination of
qualified majority voting rules and a constant voting rule that selects $b$ at all type profiles. \\

\noindent $(3) \Rightarrow (4)$. Every qualified majority and qualified anti-majority with quota zero
generates interim allocation probabilities $Q$ that satisfy $Q(a) \ge Q(b)$.
Hence, their convex combination also satisfies $Q(a) \ge Q(b)$. By Theorem \ref{theo:main},
if $Q$ is reduced form implementable then it satisfies (\ref{eq:mm1}). \\

\noindent $(4) \Rightarrow (1)$. The proof of Theorem \ref{theo:main} shows that the
set of extreme points of (\ref{eq:mm1}) is the set of qualified majority voting rules.
The line $Q(a)=Q(b)$ connects two constant voting rules and all the qualified majority voting rules
satisfy $Q(a) \ge Q(b)$. As a result, any $Q$ satisfying (\ref{eq:mm1}) and (\ref{eq:mm2})
must be reduced-form equivalent to a convex combination of qualified majority voting rules and
the two constant voting rules. Hence, it is reduced form monotone implementable.
\end{proof}

\subsection{Proof of Proposition \ref{prop:rawls}}

\begin{proof}
By Theorem \ref{theo:monimp}, the ex-ante Rawlsian rule solves the following optimization problem
\begin{align}
\max_Q \min \big(\pi Q(a), (1-\pi) (1-Q(b)\big) & \notag\\
\textrm{subject to}~~~~Q(a) &\ge Q(b)  \label{eq:M1}\\
j(1-\pi)Q(b) - (n-j)\pi Q(a) + \sum_{k=j}^n (k-j) B(k) &\ge 0 ~\qquad~\forall~j \in \{0,\ldots,n\}\label{eq:M2}
\end{align}
 
Consider the relaxed problem where we drop the inequalities in (\ref{eq:M1}). Further,
change the variables as follows: $x:=\pi Q(a)$ and $y:=(1-\pi)(1-Q(b))$.
So, the relaxed problem (with inequalities (\ref{eq:M1}) in terms of $x,y$) is
the following
\begin{align}
\max_{x,y} \min \big(x,y\big) & \notag\\
\textrm{subject to}~~~~j y + (n-j)x &\le j(1-\pi) + \sum_{k=j}^n (k-j) B(k)~\qquad~\forall~j \in \{0,\ldots,n\}\label{eq:M21}
\end{align}
Notice that for any feasible solution $(x,y)$ to the above problem, the
solution $\hat{x}=\hat{y}=\min (x,y)$ is also a feasible solution with 
the same objective function value. Hence, it is without loss of generality
to assume $x=y$. Hence, substituting $x=y$ on the LHS of (\ref{eq:M21}), 
we get $nx$, and the problem simplifies to 
\begin{align}
\max_{x} x & \notag\\
\textrm{subject to}~~nx &\le j(1-\pi) + \sum_{k=j}^n (k-j) B(k)~\qquad~\forall~j \in \{0,\ldots,n\}\label{eq:M22}
\end{align}
For every $j \in \{0,\ldots,n\}$, let $H(j):= j(1-\pi) + \sum_{k=j}^n (k-j) B(k)$.
Hence, the optimal solution is given by
\begin{align*}
x = y &= \frac{1}{n}\min_{j \in \{0,\ldots,n\}} H(j)
\end{align*}
For $j \in \{1,\ldots,n\}$, we see 
\begin{align*}
H(j) - H(j-1) = 1 - \pi - \sum_{k=j}^n B(k)
\end{align*}
Let $j^*:=\max \{j\in\{0,\cdots,n\}: \sum \limits_{k=j}^n  B(k) \geq 1-\pi\}$. Then, $H$ is
decreasing till $j^*$ and increasing after that. 
So, $x=y=(1/n)H(j^*)$ is an optimal
solution to the relaxed problem. This optimal solution corresponds to 
\begin{align*}
Q(a) &= \frac{1}{n\pi} \Big[  j^*(1-\pi) + \sum_{k=j^*}^n (k-j^*) B(k) \Big] \\
Q(b) &=  \frac{1}{n(1-\pi)} \Big[  (n-j^*)(1-\pi) - \sum_{k=j^*}^n (k-j^*) B(k) \Big]
\end{align*}
This corresponds to satisfying inequality (\ref{eq:M22}) for $j^*$.

  Now, define 
\begin{align*}
\alpha :=\frac{1}{B(j^*)}\left(1-\pi-\sum_{k=j^*+1}^n B(k)\right)
\end{align*}
By definition of $j^*$, $\alpha \in [0,1]$. Using the expressions
for $Q^{j^*}(a)$ and $Q^{j^*+1}(a)$, it can be easily verified that
\begin{align*}
Q(a)=\alpha  Q^{j^*}(a)+(1-\alpha  )Q^{j^*+1}(a) \\
Q(b)=\alpha  Q^{j^*}(b)+(1-\alpha  )Q^{j^*+1}(b)
\end{align*}
This shows that the optimal $Q$ is a convex combination of two qualified
majority voting rules with quotas $j^*$ and $j^*+1$.

Since each qualified majority is monotone,
$Q$ is also monotone. Hence, the optimum of the relaxed problem
is a monotone voting rule. \end{proof}

\subsection{Proofs of Propositions \ref{prop:large1} and \ref{prop:large2}}

\noindent {\sc Proof of Proposition \ref{prop:large1}.}

\begin{proof}
We keep $\pi$ fixed and make $n$ large. By Theorem \ref{theo:ext}, it is enough to show that for each
qualified majority $Q^j$ with quota $j$ (and qualified anti-majority) the difference in interim allocation
probabilities $Q^j(a;n) - Q^j(b;n)$ approaches zero as $n$ tends to infinity. Note that when
$j=0$, $Q^j(a;n)=Q^j(b;n)$. Hence, we only consider the case $j > 1$. By (\ref{eq:diffq}),
\begin{align}
Q^j(a;n) - Q^j(b;n)  = \frac{j}{n\pi}  C(n ,j  ) \pi^{j} (1-\pi)^{(n-j)}\leq \frac{1}{\pi}  C(n ,j  ) \pi^{j} (1-\pi)^{(n-j)} \label{eq:diffq1}
\end{align}
For $n$ sufficiently large,  the probability mass  of the Binomial distribution approaches the probability density of the normal distribution with mean $n\pi$ and variance $n\pi (1-\pi)$. Denoting the density function of this normal
distribution as $f$, we have for each $j=0,...,n$,
\begin{align*}
   C(n,j) \pi^{j} (1-\pi)^{(n-j)}\approx f(j;  n\pi,  n\pi (1-\pi) )
   \end{align*}
   The maximum of the probability mass function is obtained at $j= \lfloor (n+1)\pi  \rfloor$,
\begin{align*}
\max_{j\in\{0,...,n\}} C(n,j) \pi^{j} (1-\pi)^{(n-j)}\approx  f( \lfloor (n+1)\pi  \rfloor;  n\pi,  n\pi (1-\pi) )
\end{align*}
Notice that for all $n$, $  \lfloor (n+1)\pi  \rfloor-n\pi\leq 1$ and we have
\begin{align*}
  \lim_{n\to \infty} f( \lfloor (n+1)\pi  \rfloor;  n\pi,  n\pi (1-\pi) )   =  \lim_{n\to \infty} \frac{\exp \left({-\frac{1}{2}\left(\frac{ \lfloor (n+1)\pi  \rfloor-n\pi}{ \sqrt{n\pi (1-\pi)}}\right)^2} \right)}{\sqrt{2 \Pi }\sqrt{n\pi (1-\pi)}} =0,
\end{align*}
where $\Pi$ denotes the usual mathematical constant.\footnote{To avoid notational confusion, we use $\Pi$
instead of $\pi$ to denote the ratio of circumference of a circle and its diameter.}
Therefore, (\ref{eq:diffq1}) implies for every $j$, we have
\begin{align*}
\lim_{n \rightarrow \infty} \big[Q^j(a;n) - Q^j(b;n)\big] \leq   \lim_{n\to \infty}\max_{j\in\{0,...,n\}} C(n,j) \pi^{j} (1-\pi)^{(n-j)}=0
\end{align*}
 Since  $Q^j(a;n)-Q^j(b;n)\geq 0$, we conclude
\begin{align*}
\lim_{n \rightarrow \infty} \big[Q^j(a;n) - Q^j(b;n)\big] =  0
\end{align*}
Using (\ref{eq:diffaq}), we get that for every qualified anti-majority rules with quota $j > 1$
\begin{align*}
\lim_{n \rightarrow \infty} \big[\overline{Q}^{j}(b;n) - \overline{Q}^{j}(a;n)\big] =0
\end{align*}
\end{proof}

\noindent {\sc Proof of Proposition \ref{prop:large2}.}

\begin{proof}
Fix the mean $\mu$ and take a sequence of economies where $\pi_n$ such that $\pi_n = \mu/n$. Here, $\pi_n$ denotes the value of $\pi$ in an economy with $n$ agents.
By the Poisson limit theorem,
\begin{align*}
\lim_{n \rightarrow \infty} C(n,j) \pi_n^{j} (1-\pi_n)^{(n-j)} &= \frac{1}{j!}\mu^j e^{-\mu}
\end{align*}
Hence, using (\ref{eq:diffq}), for any qualified majority with quota $j > 1$, we have
\begin{align*}
\lim_{n \rightarrow \infty} \big[Q^j(a;n) - Q^j(b;n)\big] &= \frac{1}{(j-1)!}\mu^{j-1} e^{-\mu}
\end{align*}

Let $k_{\mu}$ be the value of $k$ that maximizes
\begin{align*}
\max_{k \in \mathbb{Z}_+} \frac{\mu^k}{k!}
\end{align*}
Note that a maximum exists since as $k \rightarrow \infty$, the expression $\mu^k/(k!)$ tends to zero.
So $k_{\mu}$ is a finite integer. Denote this maximum value multiplied by $e^{-\mu}$ as $M(\mu):= \frac{1}{(k_{\mu})!}\mu^{k_{\mu}}e^{-\mu}$.

Hence, we get
\begin{align}
\lim_{n \rightarrow \infty} \big[Q^{k_{\mu}+1}(a;n) - Q^{k_{\mu}+1}(b;n)\big] &= M(\mu) \label{eq:m1}
\end{align}

Now, for anti-majority rule with quota $j > 1$,
by (\ref{eq:diffaq}), we get
\begin{align*}
\lim_{n \rightarrow \infty} \big[\overline{Q}^j(b;n) - \overline{Q}^j(a;n)\big] &= \frac{1}{(j-1)!}\mu^{j-1} e^{-\mu}
\end{align*}
Hence, we get
\begin{align}
\lim_{n \rightarrow \infty} \big[\overline{Q}^{k_{\mu}+1}(b;n) - \overline{Q}^{k_{\mu}+1}(a;n)\big] &= M(\mu) \label{eq:m2}
\end{align}
Equations (\ref{eq:m1}) and (\ref{eq:m2}) proves the proposition.
\end{proof}

\subsection{Proof of Proposition \ref{prop:icf}}

\begin{proof}
By Theorem \ref{theo:uext}, every unanimous voting rule is reduced
form equivalent to a convex combination of u-qualified majority and u-qualified anti-majority rules.
Since a convex combination preserves OBIC, every unanimous voting rule
is OBIC if and only if every u-qualified majority and u-qualified anti-majority rule is OBIC.
We know that every u-qualified majority is OBIC (since they are strategy-proof).
Hence, every unanimous voting rule
is OBIC if and only if every u-qualified anti-majority rule is OBIC.

Let $\bar{q}^j $ be a u-qualified anti-majority rule with quota $j\in\{1,\ldots,n\}$.
Then,
\begin{align*}
\bar{Q}^j(a|a) = \bar{Q}^j(a) &= \frac{1}{n\pi} \Big[\sum_{k=1}^{j-1} k B(k) + n B(n) \Big] \\
\bar{Q}^j(b|b) = \bar{Q}^j(b) &= \frac{1}{n(1-\pi)} \sum_{k=1}^{j-1} (n-k)B(k)
\end{align*}

The value of $\bar{Q}^j(b|a)$ is computed as follows:
 \begin{align*}
\bar{Q}^j(b|a) &=\frac{1}{ \pi}\sum_{k=0}^{n-1} q^j (k )\lambda(k+1)C(n-1,k)=\frac{1}{n \pi}\sum_{k=0}^{n-1}q^j (k )\lambda(k+1)(k+1)C(n,k+1)\\
 &=\frac{1}{n \pi}\sum_{k=1}^{n}  q^j (k-1)kB(k) =\frac{1}{n \pi}\sum_{k=2}^{j} kB(k)
 \end{align*}
  Similarly we have
   \begin{align*}
 \bar{Q}^j(a|b) &=\frac{1}{1-\pi}\sum_{k=0}^{n-1} q^j (k+1)\lambda(k)C(n-1,k) =\frac{1}{n(1-\pi)}\sum_{k=0}^{n-1} q^j (k+1)\lambda(k) (n-k)C(n, k)\\
 &=\frac{1}{n(1-\pi)}\Big(\sum_{k=0}^{n-2}q^j (k+1)  \lambda(k) (n-k)C(n, k)+n\lambda(n-1)\Big) \\
 &=\frac{1}{n(1-\pi)}\Big(\sum_{k=0}^{j-2}    (n-k)B(k)+n\lambda(n-1)\Big)
\end{align*}

 Hence,
 \begin{align*}
n\pi[\bar{Q}^j(a|a) - \bar{Q}^j(b|a)] &=  \sum_{k=0}^{j -1} k B(k)+n\lambda(n)   -  \sum_{k=2}^{j} kB(k) = B(1) - j B(j) + n \lambda(n)
\end{align*}
So, $\bar{Q}^j(a|a) - \bar{Q}^j(b|a) \ge 0$ if and only if $n (\lambda(1)+\lambda(n)) \ge j B(j)$.
This inequality trivially holds for $j=1$ and $j=n$. Hence, the inequality needs to hold for
all $j \in \{2,\ldots,n-1\}$.
Similarly,
 \begin{align*}
n(1-\pi)[\bar{Q}^j(a|b) - \bar{Q}^j(b|b)] &= \sum_{k=0}^{j-2} (n-k)B(k)+n\lambda(n-1) - \sum_{k=0}^{j-1} (n-k) B(k)+ n\lambda(0) \\
&= n (\lambda(n-1) + \lambda(0)) - (n-j+1)B(j-1)
\end{align*}
Hence, $\bar{Q}^j(a|b) - \bar{Q}^j(b|b) \ge 0$ if and only if $n (\lambda(n-1) + \lambda(0))  \ge (n-j+1)B(j-1)$.
Hence, $n (\lambda(n-1) + \lambda(0))  \ge (n-j)B(j)$ should hold for $j \in \{0,1,\ldots,n-1\}$.
This inequality holds for $j=n-1$ and $j=0$ trivially.
Note that  $j B(j)= n\lambda(j) C(n-1,j-1)$  and
  $(n-j) B(j)=n \lambda(j) C(n-1,j)  $. Then we obtain condition (\ref{eq:icf1}).

When the prior is independent, (\ref{eq:icf1}) is equivalent to (\ref{eq:icf2}). To see this, pick $j \in \{1,\ldots,n-1\}$,
  \begin{align*}
n( \lambda(1)+\lambda(n) ) & \ge  j B(j)    \\
\Leftrightarrow n \pi ((1-\pi)^{n-1}+\pi^{n-1}) &\ge j B(j)
\end{align*}

Next,
\begin{align*}
n( \lambda(0) + \lambda(n-1))& \ge (n-j)B(j)  \\
\Leftrightarrow n (1-\pi) ((1-\pi)^{n-1}+\pi^{n-1})& \ge n\pi^j (1-\pi)^{n-j} C(n-1,j) \\
\Leftrightarrow n \pi ((1-\pi)^{n-1}+\pi^{n-1}) &\ge (j+1) B(j+1)
\end{align*}
Hence, for independent priors, condition (\ref{eq:icf1}) is equivalent to for all $j \in \{1,\ldots,n-1\}$,
\begin{align*}
n \pi ((1-\pi)^{n-1}+\pi^{n-1}) &\ge j B(j)
\end{align*}
This is equivalent to (\ref{eq:icf2}).
\end{proof}


\newpage

\section{Supplementary Appendix}
\label{sec:supapp}

Proofs of Theorem \ref{theo:unan}
 and Theorem \ref{theo:uext} are similar to Theorem \ref{theo:main} and Theorem \ref{theo:ext} respectively.
 They are provided here for completeness.

\subsection{Proof of Theorem \ref{theo:uext}}

\begin{proof}
Reduced form probabilities $(Q(a),Q(b))$ is u-implementable if
\begin{align}
\frac{1}{n\pi} \Big[\sum_{k=1}^{n-1}k q(k) B(k) + n B (n)\Big] &= Q(a)\label{eq:ue1} \\
\frac{1}{n(1-\pi)} \sum_{k=1}^{n-1}(n-k) q(k)B(k) &= Q(b) \label{eq:ue2}\\
0 \le q(k) &\le 1 \qquad~\forall~k \in \{1,\ldots,n-1\} \label{eq:ue3}
\end{align}

Let $\mathcal{P}_u$ be the projection of this polytope to the $(Q(a),Q(b))$ space.
Consider the following linear program
\begin{align}
\max_{Q} \mu_a Q(a) + \mu_b Q(b) & \tag{u\textbf{LP}-Q} \label{tag:uq}\\
\textrm{subject to}~~~(Q(a),Q(b))~\in \mathcal{P}_u \nonumber
\end{align}

Since each point in $\mathcal{P}_u$ is equivalent to
finding a voting rule $q$ that satisfies (\ref{eq:ue1}), (\ref{eq:ue2}), and (\ref{eq:ue3})
we can rewrite the linear program (\ref{tag:uq}) in the space of $q$ as:
\begin{align}
\max_{q} \frac{\mu_a}{n\pi}   \Big[\sum_{k=1}^{n-1}k q(k) B(k) &+ n B (n)\Big] + \frac{\mu_b}{n(1-\pi)}    \sum_{k=1}^{n-1}(n-k) q(k)B(k)
\tag{u\textbf{LP}-q} \label{tag:uq2} \\
\textrm{subject to}~~~0 \le q(k) &\le 1 \qquad~\forall~k \in \{1,\ldots, n-1\} \nonumber
\end{align}

We do the proof in two steps. \\

\noindent {\sc Step 1.} We first show that  every extreme point of $\mathcal{P}_u$ is implemented by either a  u-qualified majority or a u-qualified anti-majority voting rule, i.e., every element of $\mathcal{P}_u$ can be written as a convex combination
of u-qualified (anti-)majority voting rules.
  
It is sufficient to show that for every $\mu_a$ and $\mu_b$, there is a solution to (\ref{tag:uq}) that is implemented by some u-qualified (anti-)majority voting rule.  To show this, we will show that for every $\mu_a$ and $\mu_b$, some u-qualified (anti-)majority voting rule is a solution to (\ref{tag:uq2}).
  
By denoting $\hat{\mu}_a:= \mu_a/(n\pi)$ and $\hat{\mu}_b:= \mu_b/(n(1-\pi))$,
we see that the objective function of (\ref{tag:uq2}) is
\begin{align*}
\sum_{k=1}^{n-1} \big[ n\hat{\mu}_b + k (\hat{\mu}_a - \hat{\mu}_b) \big]q(k)B(k)+\hat{\mu}_a n B(n)
\end{align*}

If $n\hat{\mu}_b + k (\hat{\mu}_a - \hat{\mu}_b) > 0$ for all $k$, then a solution
to (\ref{tag:uq2}) is to set $q(k)=1$
for all $k$. This is the qualified majority with quota $0$. If $n\hat{\mu}_b + k (\hat{\mu}_a - \hat{\mu}_b) < 0$
for all $k$, then a solution to (\ref{tag:uq2}) is to set $q(k)=0$ for all $k$. This is the u-qualified anti-majority with quota $1$.
If $n\hat{\mu}_b + k (\hat{\mu}_a - \hat{\mu}_b) = 0$ for all $k$, then  every unanimous rule is a solution to (\ref{tag:uq}).

If the sign of $n\hat{\mu}_b + k (\hat{\mu}_a - \hat{\mu}_b)$ changes with $k$, then we consider two
cases. If $\hat{\mu}_a > \hat{\mu}_b$, then there is a cut-off $k^*$ such that
$n\hat{\mu}_b + k (\hat{\mu}_a - \hat{\mu}_b) > 0$ for all $k \ge k^*$ and
$n\hat{\mu}_b + k (\hat{\mu}_a - \hat{\mu}_b) < 0$ for all $k < k^*$. Then, the
 qualified majority with quota $k^*$ is a solution to (\ref{tag:uq2}).
On the other hand if $\hat{\mu}_a < \hat{\mu}_b$, then there is a cutoff $k^*$
such that $n\hat{\mu}_b + k (\hat{\mu}_a - \hat{\mu}_b) > 0$ for all $k \le k^*$ and
$n\hat{\mu}_b + k (\hat{\mu}_a - \hat{\mu}_b) < 0$ for all $k > k^*$. Then, the u-qualified
anti-majority with quota $k^*+1$ is a solution. When $\hat{\mu}_a=\hat{\mu}_b$
the sign of $n\hat{\mu}_b + k (\hat{\mu}_a - \hat{\mu}_b)$ does not change with $k$.
Note that in both cases above, if $n\hat{\mu}_b + k (\hat{\mu}_a - \hat{\mu}_b) = 0$
for $k=k^*$,  the u-qualified (anti-)majority with quota $k^*$ is a solution to (\ref{tag:uq2}). \\

\noindent {\sc Step 2.} We now show that every u-qualified (anti-)majority voting rule  implements a {\it distinct} extreme point of $\mathcal{P}_u $. Every extreme point in $\mathcal{P}_u$ is obtained by considering values of $\mu_a$ and $\mu_b$
which generate a {\sl unique} optimal solution to the linear program (\ref{tag:uq}).
 It is sufficient to show that  every  u-qualified (anti-)majority voting rule is unique optimal solution
to (\ref{tag:uq2}) for some $\mu_a$ and $\mu_b$. This can be seen from the analysis above that for almost all $\mu_a$ and $\mu_b$, in case an optimal solution to (\ref{tag:uq2}) exists, it is unique,  and corresponds to a u-qualified majority or a u-qualified
anti-majority voting rule.

Combining Steps 1 and 2,  we have that the set of extreme points of $\mathcal{P}_u$
 are the set of u-qualified majority voting rules and the set of u-qualified anti-majority voting rules. \end{proof}

\subsection{Proof of Theorem \ref{theo:unan}}

\begin{proof}
\noindent {\sc Necessity.} The necessity of (\ref{eq:u1}) follows from (\ref{eq:e1}) in Theorem \ref{theo:main}.
So, we only show necessity of (\ref{eq:u2}). Suppose $Q$ is reduced form u-implementable by
a unanimous voting rule $q$:
\begin{align*}
\frac{1}{n\pi} \Big[\sum_{k=1}^{n-1}k q(k) B(k) + n B (n)\Big] &= Q(a) \\
\frac{1}{n(1-\pi)} \sum_{k=1}^{n-1}(n-k)q(k)B(k) &=Q(b) \\
\end{align*}
Now, pick $j \in \{0,\ldots,n\}$ and observe that
\begin{align*}
n(n-j)\pi Q(a) - n j (1-\pi)Q(b) &= (n-j) \big[\sum_{k=1}^{n-1}k q(k) B(k) + n B(n) \big] - j \sum_{k=1}^{n-1}(n-k) q(k)B(k) \\
&= n(n-j) B(n) - n j \sum_{k=1}^{n-1}q(k)B(k) + n \sum_{k=1}^{n-1}k q(k) B(k)
\end{align*}
Hence, we have
\begin{align*}
(n-j)\pi Q(a) - j (1-\pi)Q(b) &= (n-j) B(n) - \sum_{k=1}^{n-1}(j-k) q(k) B(k)
\end{align*}
Hence,
\begin{align*}
(n-j)\pi Q(a) - j (1-\pi)Q(b) +  \sum_{k=1}^{j}(j-k) B(k) &\ge (n-j) B(n) \\
\Rightarrow (n-j)\pi Q(a) - j (1-\pi)Q(b) +  \sum_{k=0}^{j}(j-k) B(k) &\ge j B(0) + (n-j) B(n) = j \lambda(0) + (n-j) \lambda(n)
\end{align*}

\noindent {\sc Sufficiency.} Let $\mathcal{P}^*_u$ denote the polytope described by (\ref{eq:u1}) and (\ref{eq:u2}).
We show that the extreme points of $\mathcal{P}^*_u$ correspond to
the u-qualified majority and the u-qualified anti-majority voting rules. From Theorem 4,
we know that the extreme points of $\mathcal{P}_u$ also correspond to
the u-qualified majority and the u-qualified anti-majority voting rules. Hence, $\mathcal{P}_u=\mathcal{P}^*_u$.

To show that the extreme points of $\mathcal{P}^*_u$ correspond to
the u-qualified majority and the u-qualified anti-majority voting rules,
we follow two steps. \\

\noindent {\sc Every $q \in \mathcal{Q}_u^+ \cup \mathcal{Q}_u^-$ is an extreme point.}  Consider any u-qualified majority voting rule with quota $j\in\{1,\ldots, n\} $. Using
\begin{align*}
n\pi Q^+_u(a)  =  \sum_{k=j}^n k B(k)~~~\textrm{and}~~~n(1-\pi)Q^+_u(b)  =    \sum_{k=j}^{n} (n-k) B(k),
\end{align*} 
it is easy to verify that  $Q^+_u$ satisfies all  inequalities in  $(\ref{eq:u1})$ and $(\ref{eq:u2})$ and inequality
(\ref{eq:u1}) is binding for $j$ and $(j-1)$ at $Q^+_u$.  Since $Q^+_u\in\mathcal{P}^*_u$ and  $Q^+_u$ is the intersection of two linearly independent hyperplanes, it gives an extreme point of $\mathcal{P}^*_u$.  Hence the interim allocation probability of every u-qualified majority voting rule  is an extreme point.

An analogous argument shows that  the interim allocation probability of every u-qualified anti-majority voting rule   is an extreme point.
  
\noindent {\sc No extreme point outside $\mathcal{Q}_u^+ \cup \mathcal{Q}_u^-$.} Analogous to the proof of Theorem \ref{theo:main}, we can show that inequality (\ref{eq:u1}) cannot bind for $j$ and $(j+\ell)$ for $\ell > 1$. Now assume for contradiction that inequality (\ref{eq:u2}) binds for $j$ and $j+\ell$,
where $\ell > 1$. The equality corresponding to $(j+\ell)$ is
\begin{align*}
0 =& (n-j-\ell)\pi Q(a)-(j +\ell)(1-\pi)Q(b) + \sum_{k=0}^{j+\ell}(j+\ell-k) B(k)-(j+\ell)\lambda(0)-(n-j-\ell)\lambda(n)
 \end{align*}
Since inequality  (\ref{eq:u2}) binds for $j$, substitute this equality into inequality  (\ref{eq:u2}) for $(j+1)$,  it  gives us
\begin{align*}
\pi Q(a) + (1-\pi)Q(b) &\le \sum_{k=0}^j B(k)-\lambda(0)+\lambda(n) \label{eq:u23}
\end{align*}
Then we get
\begin{align*}
0 &\ge   -\sum_{k=0}^j (j-k) B(k) +j\lambda(0)+(n-j)\lambda(n) - \sum_{k=0}^j\ell B(k) + \ell \lambda(0)-\ell \lambda(n)\\
&\,\,\,\,\,\,\,+ \sum_{k=0}^{j+\ell}(j+\ell-k)B(k) - (j+\ell)\lambda(0)-(n-j-\ell)\lambda(n)\\
 &= \sum_{k=j+1}^{j+\ell} (j+\ell - k) B(k) \\
&> 0
\end{align*}
which is a contradiction. Hence, inequality (\ref{eq:u2}) cannot bind for $j$ and $(j+\ell)$ for $\ell > 1$.

Next assume for contradiction inequality (\ref{eq:u1}) binds for $j$ and inequality (\ref{eq:u2}) binds
for $\ell$. Hence, adding those two equalities, we get
\begin{align*}
0&= (j-\ell) ((1-\pi)Q(b)+\pi Q(a))    + \sum_{k=0}^{\ell } (\ell-k)B(k) + \sum_{k=j }^n (k-j)B(k)-\ell \lambda(0)-(n-\ell)\lambda(n)
\end{align*}

If $1\leq  \ell\leq j\leq n-1$ and $(j,\ell)\neq (n-1,1)$,   using $(1-\pi)Q(b) + \pi Q(a)\geq \lambda(n)$,   we get
\begin{align*}
0 & \ge (j-\ell)\lambda(n)+  \sum_{k=0}^{\ell } (\ell-k)B(k) + \sum_{k=j }^n (k-j)B(k)-\ell \lambda(0)-(n-\ell)\lambda(n) \\
&= \ell \lambda(0)+    \sum_{k=1}^{\ell } (\ell-k)B(k)+  \sum_{k=j }^{n-1} (k-j)B(k)+(n-j)\lambda(n)-\ell \lambda(0)-(n-j)\lambda(n)     \\
&=  \sum_{k=1}^{\ell } (\ell-k)B(k)+  \sum_{k=j }^{n-1} (k-j)B(k) \\
&>0
\end{align*}

If $1\leq j < \ell\leq n-1$ and $(j,\ell)\neq (1,n-1)$, using $(1-\pi)Q(b) + \pi Q(a)\leq 1- \lambda(0)$, we get
\begin{align*}
0 &  \ge    (j-\ell)-   (j-\ell)\lambda(0) + \sum_{k=0}^{\ell-1} (\ell-k)B(k) + \sum_{k=j+1}^n (k-j)B(k)-\ell \lambda(0)-(n-\ell)\lambda(n)  \\
 &=\sum_{k=0}^j  (j-k)  B(k)  +\sum_{k=\ell }^n (k- \ell) B(k)- j\lambda(0)-(n-\ell)\lambda(n)    \\
  &=j\lambda(0)+\sum_{k=1}^j  (j-k)  B(k)  +\sum_{k=\ell }^{n-1} (k- \ell) B(k)+ (n-\ell)\lambda(n) - j\lambda(0)-(n-\ell)\lambda(n)\\
  &= \sum_{k=1}^j  (j-k)  B(k)  +\sum_{k=\ell }^{n-1} (k- \ell) B(k)   \\
 &>0
\end{align*}
which also gives us a contradiction. On the other hand, for $(j,\ell)= (n-1,1)$, we have
 \begin{align*}
(n-1) (1-\pi)Q(b) -  \pi Q(a)+B(n) &= 0\\
(n-1)\pi Q(a) -  (1-\pi)Q(b)  +  B(0) &=  \lambda(0)+(n-1)\lambda(n)
\end{align*}
which gives  $Q(b)=0$ and $\pi Q(a)=\lambda(n)$, corresponding to a u-qualified majority with quota $n$. Analogously, for $(j,\ell)= (1,n-1)$, (\ref{eq:u1}) and  (\ref{eq:u2})
   give $Q(a)=1$ and $ (1-\pi)Q(b) +\pi=1-\lambda(0)$, which corresponds to a u-qualified anti-majority with quota $n$.  For $j=0,n$ and $\ell=0,n$, the inequalities are implied by $(n-1,1)$ and $(1,n-1)$ and hence redundant.
\end{proof}

\end{document}